%%
%% This is file `squelette-rr.tex',
%% generated with the docstrip utility.
%%
%% The original source files were:
%%
%% RR.dtx  (with options: `sample')
%% ********************************************************************
%% Copyright (C) 1997-1999 2004 2006-2011 INRIA/APICS/MARELLE by Jose' Grimm
%% This file may be distributed and/or modified under the
%% conditions of the LaTeX Project Public License, either version 1.3
%% of this license or (at your option) any later version.
%% The latest version of this license is in
%%    http://www.latex-project.org/lppl.txt
%% and version 1.3 or later is part of all distributions of LaTeX
%% version 2003/12/01 or later.
%% An archive of the software can be found at
%%    ftp://ftp-sop.inria.fr/marelle/RR-INRIA

\documentclass[twoside]{article}
\usepackage[a4paper]{geometry}
\usepackage[latin1]{inputenc} % ou \usepackage[utf8]{inputenc}
\usepackage[T1]{fontenc} % ou \usepackage[OT1]{fontenc}
\usepackage{RR}
\usepackage{hyperref}
\RRNo{9262}
%% \RTNo{0703}
%%
%% date de publication du rapport
\RRdate{March 2019}
%%
%% Cas d'une version deux
%% \RRversion{2}
%% date de publication de la version 2
%% \RRdater{November 2008}
%%
\RRauthor{% les auteurs
 % Premier auteur, avec une note
Ra\'ul Pardo
 % \and entre chaque auteur s'il y en a plusieurs
  \and
Daniel Le M\'etayer}
%% Ceci apparait sur chaque page paire.
\authorhead{R. Pardo \& D. Le M\'etayer}
%% titre francais long
\RRtitle{Analyse de politiques de vie privée pour un consentement mieux inform\'e}
%% English title
\RRetitle{Analysis of Privacy Policies to Enhance Informed Consent (Extended Version)}
\titlehead{Analysis of Privacy Policies to Enhance Informed Consent (Extended Version)}
%%
% \RRnote{This is a note}
% \RRnote{This is a second note}
%%
\RRresume{
Dans ce rapport, nous d\'ecrivons une m\'ethode d'analyse de politiques de vie priv\'ee qui a pour but d'am\'eliorer la qualit\'e du consentement des sujets. La d\'emarche repose sur l'utilisation d'un langage d'expression de politiques de vie priv\'ee qui peut \^{e}tre utilis\'e pour comparer et analyser les politiques. Nous d\'ecrivons un outil qui produit automatiquement les risques associ\'es \`a une politique donn\'ee pour permettre aux sujets de prendre des d\'ecisions mieux inform\'ees. L'analyse de risques est illustr\'ee par un exemple de l'internet des objets.
}

\RRabstract{
  In this report, we present an approach to enhance informed consent
  for the processing of personal data.
  The approach relies on a privacy policy language used to express,
  compare and analyze privacy policies.
  We describe a tool that automatically reports the privacy risks
  associated with a given privacy policy in order to enhance data
  subjects' awareness and to allow them to make more informed choices.
  The risk analysis of privacy policies is illustrated with an IoT
  example.
}
\RRmotcle{Langage d'expression de politiques de vie priv\'e, Risques pour la vie privée, Consentement inform\'e}
\RRkeyword{Privacy Policy Language, Privacy Risks, Informed Consent}
%%
%% \RRprojet{Apics}  % cas d'un seul projet
\RRprojets{Privatics}
%%
%% \URLorraine % pour ceux qui sont \`a l'est
%% \URRennes  % pour ceux qui sont \`a l'ouest
\URRhoneAlpes % pour ceux qui sont dans les montagnes
%% \URRocq % pour ceux qui sont au centre de la France
%% \URFuturs % pour ceux qui sont dans le virtuel
%% \URSophia % pour ceux qui sont au Sud.
%%
%% \RCBordeaux % centre de recherche Bordeaux - Sud Ouest
%% \RCLille % centre de recherche Lille Nord Europe
%% \RCParis % Paris Rocquencourt
%% \RCSaclay % Saclay \^Ile de France
\RCGrenoble % Grenoble - Rh\^one-Alpes
%% \RCNancy % Nancy - Grand Est
%% \RCRennes % Rennes - Bretagne Atlantique
%% \RCSophia % Sophia Antipolis M\'editerran\'ee

%%% Packages %%%
%%% Math library
\usepackage{amsmath}

%%% To write symbols above and below "="
\let\vec\relax
\DeclareMathAccent{\vec}{\mathord}{letters}{"7E} % Fixes problems with vec and accents
\usepackage{accents}

%%% Graph drawing library
\usepackage{tikz}
\usetikzlibrary{positioning,shapes,shadows,arrows,backgrounds}
\usepackage{pgf} %% For backgrounds
\usepackage{pgfplots} %% For plotting

%%% Vectors library (to put arrows on top of terms)
\usepackage{esvect}

%%% Lists library (enumerations, itemize, ...)
\usepackage{paralist}

%%% Todos library to write comments in the margins
\usepackage[shadow]{todonotes}

%%% Additional symbols
\usepackage{amssymb}
\usepackage{stmaryrd} % llbracket and so on.

%%% Cancel symbol
\usepackage{cancel}

%%% Theorems library
\usepackage{amsthm}

%%%% New environments
\newtheorem{lemma}{Lemma}

\newtheorem{definition}{Definition}
\newtheorem{example}{Example}

%%% Advanced braces
% \usepackage{mathtools}

%%% URLs
\usepackage{hyperref}
\urlstyle{same} % Use same font for urls

%%% Xspace for words defined in math environment
\usepackage{xspace}

%%% Used to rotate the table
\usepackage{graphicx}

%%% Package for operational semantics rules
\usepackage{mathpartir}

%%% Tikz for drawing (sequence diagrams)
\usepackage{tikz}
\usetikzlibrary{calc}

%%% Alternative package for message passing diagrams
\usepackage{msc}

%%% Appendix
\usepackage{appendix}

%%% Tabularx for fixed size in table columns
\usepackage{tabularx}

%%% xcolor to colour cells in tables
\usepackage{xcolor,colortbl}

%%% Multirow
\usepackage{multirow}

%%% Using times font
\usepackage{times}

%%% Macros file
% Policy components
% \newcommand{\Data}{\ensuremath{\mathit{Data}}\xspace}

% Transfer components
% \newcommand{\thirdparty}{\ensuremath{\mathit{third\_party}}\xspace} -- Already defined for policy components

% Activation Context components
 % They are part of the context activation
 % They are part of the context activation

% Terms

\newcommand{\Lyon}{\ensuremath{\mathit{Lyon}}\xspace}

\newcommand{\valueplatealice}{\ensuremath{\texttt{GD-042-PR}}\xspace}

\newcommand{\dataprivacydaynum}{\ensuremath{\mathit{26/04/2019}}\xspace}
\newcommand{\dataprivacydaynumnonit}{\ensuremath{26/04/2019}\xspace}
\newcommand{\gdprdaynum}{\ensuremath{\mathit{21/03/2019}}\xspace}
\newcommand{\gdprdaynumnonit}{\ensuremath{21/03/2019}\xspace}

% Data type atoms

\newcommand{\carlocation}{\ensuremath{\mathit{car{\mkern-3mu}\_location}}\xspace}
\newcommand{\watchlocation}{\ensuremath{\mathit{smartwatch{\mkern-3mu}\_location}}\xspace}

\newcommand{\numplatealice}{\ensuremath{\mathit{plate}_{\Alice}}\xspace}
\newcommand{\numplate}{\ensuremath{\mathit{number{\mkern-3mu}\_plate}}\xspace}
\newcommand{\age}{\ensuremath{\mathit{age}}\xspace}

\newcommand{\address}{\ensuremath{\mathit{address}}\xspace}
\newcommand{\city}{\ensuremath{\mathit{city}}\xspace}

\newcommand{\datatype}{\ensuremath{\mathit{t}}\xspace}
\newcommand{\dev}{\ensuremath{\mathit{dev}}\xspace}
\newcommand{\sndr}{\ensuremath{\mathit{sndr}}\xspace}
\newcommand{\rcv}{\ensuremath{\mathit{rcv}}\xspace}
\newcommand{\st}{\ensuremath{\mathit{st}}\xspace}

% Data attributes

 % Already defined as

% Purpose atoms

\newcommand{\research}{\ensuremath{\mathit{research}}\xspace}
\newcommand{\advertisement}{\ensuremath{\mathit{advertisement}}\xspace}
\newcommand{\newsletter}{\ensuremath{\mathit{newsletter}}\xspace}

\newcommand{\profiling}{\ensuremath{\mathit{profiling}}\xspace}
\newcommand{\commercialoffers}{\ensuremath{\mathit{commercial\_offers}}\xspace}

% Variables
\newcommand{\rt}{\ensuremath{\mathit{rt}}\xspace}
\newcommand{\dur}{\ensuremath{\mathit{dur}}\xspace}
\newcommand{\dcr}{\ensuremath{\mathit{dcr}}\xspace}
\newcommand{\TR}{\ensuremath{\mathit{TR}}\xspace}
\newcommand{\tr}{\ensuremath{\mathit{tr}}\xspace}

% Entities
\newcommand{\Google}{\ensuremath{\mathit{Google}}\xspace}
\newcommand{\Alphabet}{\ensuremath{\mathit{Alphabet}}\xspace}

\newcommand{\Parket}{\ensuremath{\mathit{Parket}}\xspace}
\newcommand{\ParketWW}{\ensuremath{\mathit{ParketWW}}\xspace}
\newcommand{\CarInsure}{\ensuremath{\mathit{CarInsure}}\xspace}
\newcommand{\AdsCom}{\ensuremath{\mathit{AdsCom}}\xspace}
% \newcommand{\Company}{\ensuremath{\mathit{Company}_A}\xspace}

% Subjects

\newcommand{\Alice}{\ensuremath{\mathit{Alice}}\xspace}

% "Privacy-enhancing" operations (transformations)

% Sets
\newcommand{\Values}[1]{\ensuremath{\mathcal{V}_{#1}}\xspace}
\newcommand{\DataTypes}{\ensuremath{\mathcal{T}}\xspace}
\newcommand{\DataItems}{\ensuremath{\mathcal{I}}\xspace}
\newcommand{\Purposes}{\ensuremath{\mathcal{P}}\xspace}
\newcommand{\Functions}{\ensuremath{\mathcal{F}}\xspace}

\newcommand{\DataControllers}{\ensuremath{\mathcal{E}}\xspace}

\newcommand{\Entities}{\DataControllers}

\newcommand{\nat}{\ensuremath{\mathbb{N}}\xspace}

\newcommand{\wfc}{\ensuremath{\mathcal{C}}\xspace}

\newcommand{\wfur}{\ensuremath{\mathcal{D{\mkern-0mu}U{\mkern-1mu}R}}\xspace}
\newcommand{\wfcr}{\ensuremath{\mathcal{D{\mkern-0mu}C{\mkern-0mu}R}}\xspace}

\newcommand{\wfpp}[1]{\ensuremath{\mathcal{PP}_{#1}}\xspace} % Not happy that such important set has a non-intuitive name for the set

%% System's state sets
\newcommand{\Devices}{\ensuremath{\mathcal{D}}\xspace}
\newcommand{\Events}{\ensuremath{E}\xspace}

\newcommand{\PolicySet}[1]{\ensuremath{\policies_{#1}}\xspace} %% NOTE(raul): not happy that it is a lower case letter
 %% NOTE(raul): not happy that it is a lower case letter
 %% NOTE(raul): not happy that it is a lower case letter
% \newcommand{\KnowledgeSet}[1]{\ensuremath{\mathrm{K}_{#1}}\xspace} %% NOT USED
 %% NOT USED
 %% NOTE(raul): not happy that it is a lower case letter
 %% NOTE(raul): not happy that it is a lower case letter
\newcommand{\ReceivedDataSet}[1]{\ensuremath{\receiveddata_{#1}}\xspace} %% NOTE(raul): not happy that it is a lower case letter

%% Context Sets

% Events
\newcommand{\request}{\ensuremath{\mathit{request}}\xspace}
\newcommand{\send}{\ensuremath{\mathit{send}}\xspace}
\newcommand{\transfer}{\ensuremath{\mathit{transfer}}\xspace}

\newcommand{\use}{\ensuremath{\mathit{use}}\xspace}
\newcommand{\illegaluse}{\ensuremath{\mathit{illegal\_use}}\xspace}
\newcommand{\illegaltransfer}{\ensuremath{\mathit{illegal\_transfer}}\xspace}

%% Events parameters

\newcommand{\pur}{\ensuremath{\mathit{pur}}\xspace}

%% State

\newcommand{\policies}{\ensuremath{\pi}\xspace}

\newcommand{\dataval}{\ensuremath{\nu}\xspace}

\newcommand{\receiveddata}{\ensuremath{\rho}\xspace}

% Relations

\newcommand{\poentities}{\ensuremath{\leq_{\Entities}}}
\newcommand{\popurposes}{\ensuremath{\leq_{\Purposes}}}
\newcommand{\podata}{\ensuremath{\leq_{\DataTypes}}}

\newcommand{\dursubsumes}{\ensuremath{\preceq_{\wfur}}}
\newcommand{\dcrsubsumes}{\ensuremath{\preceq_{\wfcr}}}

\newcommand{\subsumes}{\ensuremath{\sqsubseteq}}
\newcommand{\der}{\ensuremath{\vdash}}

% Functions
% \newcommand{\eval}{\texttt{eval}\xspace}
% \newcommand{\eval}[3]{\ensuremath{\llbracket \, #3 \, \rrbracket^{#1}_{#2}}\xspace}
\newcommand{\eval}[3]{\ensuremath{\texttt{eval}(#1,#2,#3)}\xspace}
\newcommand{\type}{\texttt{type}\xspace}
\newcommand{\activePolicy}{\texttt{activePolicy}\xspace}
\newcommand{\activeTransfer}{\texttt{activeTransfer}\xspace}

\newcommand{\timestamp}{\texttt{time}\xspace}

\newcommand{\owner}{\texttt{owner}\xspace}

\newcommand{\devtoent}{\ensuremath{\texttt{entity}}\xspace}

\newcommand{\mini}{\ensuremath{\mathit{min}}\xspace}

% Constants
\newcommand{\true}{\ensuremath{\mathit{tt}}\xspace}
\newcommand{\false}{\ensuremath{\mathit{ff}}\xspace}

% Operations
\newcommand{\policyjoin}{\ensuremath{\sqcup}}
\newcommand{\durjoin}{\ensuremath{\sqcup}_{\wfur}}
\newcommand{\dcrjoin}{\ensuremath{\sqcup}_{\wfcr}}

% Function types
\newcommand{\partialfunction}{\ensuremath{\rightharpoonup}}

% Location icon
% \newcommand*{\pin}{%
%   \includegraphics[height=\heightof{M}]{figures/location-pin}%
% }
% \newcommand*{\pinlocalice}{\ensuremath{\pin^{\mathit{Alice}}_1}\xspace}
% \newcommand*{\pinlocalicetwo}{\ensuremath{\pin^{\mathit{Alice}}_2}\xspace}

% Policies
\newcommand*{\palicetrans}{\ensuremath{\mathit{p\_trans}_\Alice}\xspace}
\newcommand*{\palicenotrans}{\ensuremath{\mathit{p\_no\_trans}_\Alice}\xspace}

% Comments macros

% \definecolor{new-green}{RGB}{0,128,26} % Defining an easy to read green
\definecolor{new-green}{RGB}{0,102,0} % Defining an easy to read green
% \newcommand{\new}[1]{{\color{blue} \textbf{NEW} [#1]}}

% \newcommand{\new}[1]{#1}

%% Acronyms
\newcommand{\dc}{\text{DC}\xspace}
\newcommand{\dcs}{\text{DCs}\xspace}
\newcommand{\ds}{\text{DS}\xspace}
\newcommand{\dss}{\text{DSs}\xspace}
\newcommand{\dcvar}{\ensuremath{\mathit{D{\mkern-3mu}C}}\xspace}

%% Expressions
\newcommand{\etal}{\textit{et al.}\xspace}

% Name of the language
\newcommand{\pilot}{{\sc Pilot}\xspace} % PrIvacy Language for iOT
\newcommand{\spin}{{SPIN}\xspace} % PrIvacy Language for iOT
\newcommand{\promela}{{\sc Promela}\xspace} % PrIvacy Language for iOT

%%% Local Variables:
%%% TeX-master: "main.tex"
%%% End:

%%% Change to \varphi when calling \phi
\let\temp\phi
\let\phi\varphi
\let\varphi\temp

%%% Pakcages %%%

%%
\begin{document}
\makeRR   % cas d'un rapport de recherche
%% \makeRT % cas d'un rapport technique.
%% a partir d'ici, chacun fait comme il le souhaite

%%%%%%%%%%%%%%%%%%
%% Introduction %%
%%%%%%%%%%%%%%%%%%
\section{Introduction}

One of the most common argument to legitimize the collection of personal data is the fact that the persons concerned have provided their consent or have the possibility to object to the collection. Whether opt-out is considered as an acceptable form of consent (as in the recent California Consumer Privacy Act\footnote{\url{https://leginfo.legislature.ca.gov/faces/billTextClient.xhtml?bill_id=201720180AB375}}) or opt-in is required (as in the European General Data Protection Regulation  - GDPR\footnote{\url{https://eur-lex.europa.eu/legal-content/EN/TXT/?uri=uriserv:OJ.L_.2016.119.01.0001.01.ENG&toc=OJ:L:2016:119:TOC}}), a number of conditions have to be met to ensure that the collection respects the true will of the person concerned. In fact, one may argue that this is seldom the case.
In practice, internet users generally have to consent on the fly, when they want to use a service, which leads them to accept mechanically the conditions of the provider. Therefore, their consent is generally not really informed because they do not read the privacy policies of the service providers. In addition, these policies are often vague and ambiguous. This situation, which is already critical, will become even worse with the advent of the internet of things (``IoT'') which has the potential to  extend to the ``real world'' the tracking already in place on the internet.

A way forward to address this issue is to allow users to define their own privacy policies, with the time needed to reflect on them, possibly even with the help of experts or pairs. These policies could then be applied automatically to decide upon the disclosure of their personal data and the precise conditions of such disclosures. The main benefit of this approach is to reduce the imbalance of powers between individuals and the organizations collecting their personal data (hereafter, respectively data subjects, or \dss, and data controllers, or \dcs, following the GDPR terminology): each party can define her own policy and these policies can then be compared to decide whether a given \dc is authorized to collect the personal data of a \ds. In practice, \dss can obviously not foresee all possibilities when they define their initial policies and they should have the opportunity to update them when they face new types of \dcs or new types of purposes for example. Nevertheless, their privacy policies should be able to cope with most situations and, as time passes, their coverage would become ever larger.

However, a language to define privacy policies must meet a number of requirements to be able to express the  consent of the \dss to the processing of their personal data. For example, under the GDPR, valid consent must be freely given, specific, informed and unambiguous. Therefore, the language must be endowed with a formal semantics in order to avoid any ambiguity about the meaning of a privacy policy. However, the mere existence of a semantics does not imply that \dss properly understand the meaning of a policy and its potential consequences.
One way to enhance the understanding of the \dss is to provide them information about the potential risks related to a privacy policy. This is in line with Recital 39 of the GDPR which stipulates that data subjects should be ``made aware of the risks, rules, safeguards and rights in relation to the processing of personal data and how to exercise their rights in relation to such processing''.
This approach can enhance the awareness of the \dss and allow them to adjust their privacy policies in a better informed way.

A number of languages and frameworks have been proposed in the literature to express privacy policies. However, as discussed in Section 6, none of them meets all the above requirements, especially the strong conditions for valid consent laid down by the GDPR. In this report, we define a language, called  \pilot, meeting these requirements and show its benefits to define precise privacy policies and to highlight the associated privacy risks. Even though \pilot is not restricted to the IoT, the design of the language takes into account the results of previous studies about the expectations and privacy preferences of IoT users \cite{PEPIoTWebhdbfs17}. 

We introduce the language in Section~\ref{sec:language} and its abstract execution model in Section~\ref{sec:model}. Then we show in Section~\ref{sec:risk-analysis} how it can be used to help \dss defining their own privacy policies and understanding the associated privacy risks. Because the language relies on a well-defined execution model, it is possible to reason about privacy risks and to produce (and prove) automatically answers to  questions raised by the \dss.  
In Section \ref{sec:benefits-gdpr}, we discuss the benefits of \pilot
in the context of the GDPR and also the aspects of the GDPR that
cannot be covered by a privacy policy language.
In Section~\ref{sec:related-work}, we compare \pilot with existing
privacy policy languages, and we conclude the report with avenues for
further research in Section~\ref{sec:conclusion}.

% The General Data Protection Regulation offers the promise of better privacy and data protection for individuals but the effectiveness of some of its key provisions depends on many factors, in particular the availability of appropriate technical and organizational means. Among these provisions, the requirements for valid consent, which have been strengthened in the GDPR, are especially challenging. In a nutshell, valid consent must be freely given, specific, informed and unambiguous. One could argue that, in practice, all these conditions are often violated.

%In this report, we focus on the privacy policy language itself. In practice, the IoT raises also other types of challenges: for example, it is not easy to ensure that privacy policies are always communicated to all relevant devices (in both directions, from \dcs to \dss and vice versa) and that \dss can easily interact with a device to define their policies. A solution to these interaction and communication issues, which are to a large extent independent of the privacy policy language, is proposed in a companion paper \cite{lemetayer:hal-01953052}.

%%% Local Variables:
%%% mode: latex
%%% TeX-master: "../main"
%%% End:

%%%%%%%%%%%%%%%%%%%%%%%%%%%%%%%%%%%%%%%%%%%%%%%%%%%%%%%%%%
%% Language Definition (including syntax and semantics) %%
%%%%%%%%%%%%%%%%%%%%%%%%%%%%%%%%%%%%%%%%%%%%%%%%%%%%%%%%%% 
\section{The Privacy Policy Language \pilot}
\label{sec:language}

In this section we introduce, \pilot, a privacy policy language
meeting the objectives set forth in Section 1.
The language is designed so that it can be used both by \dcs (to
define certain aspects of their privacy rules or general terms
regarding data protection) and \dss (to express their consent).
\dcs can also keep \dss policies for later use for
\emph{accountability} purposes---i.e., to show that data has been
treated in accordance with the choices of \dss.

\dcs devices must declare their privacy policies before they collect
personal data.  We refer to these policies as \emph{\dc policies}.
Likewise, when a \ds device sends data to a \dc device, the \ds device
must always include a policy defining the restrictions imposed by the
\ds on the use of her data by the \dc.
We refer to these policies as \emph{\ds policies}.

In what follows we formally define the language \pilot.
We start with definitions of the most basic elements of \pilot
(Section~\ref{subsec:basic-definitions}), which are later used to
define the abstract syntax of the language
(Section~\ref{subsec:pilot-syntax}). This syntax is then illustrated
with a working example (Section~\ref{subsec:example}).

%%%%%%%%%%%%%%%%%%%%%%%
%% Basic Definitions %%
%%%%%%%%%%%%%%%%%%%%%%%
\subsection{Basic definitions}
\label{subsec:basic-definitions}

%%% Here we present the basic elements in the \pilot language.
%
\paragraph{Devices and Entities.}
We start with a set \Devices of  \emph{devices}.
Concretely, we consider devices that are able to store, process and
communicate data.
For example, a smartphone, a laptop, an access point, an autonomous
car, etc.

Let \Entities denote the set of \emph{entities} such as Google or
Alphabet and $\poentities$ the associated partial order---e.g., since
Google belongs to Alphabet we have $\Google \poentities \Alphabet$.
Entities include \dcs and \dss.  Every device is associated with an
entity.
However, entities may have many devices associated with them.
The function $\devtoent : \Devices \to \Entities $ defines the entity
associated with a given device.

\paragraph{Data Items, datatypes and values.}
Let \DataItems be a set of \emph{data items}.
Data items correspond to the pieces of information that devices
communicate.
Each data item has a \emph{datatype} associated with it.
Let \DataTypes be a set of datatypes and  $\podata$ the associated partial order. We use function
$\type : \DataItems \to \DataTypes$ to define the datatype of each
data item.
Examples of datatypes\footnote{Note that
here we do not use the term ``datatype'' as traditionally in programming languages.
  We use datatype to refer to the semantic meaning of data items.} are: age, address, city and clinical records.
Since \city is one of the elements that the datatype \address may
be composed of, we have $\city \podata \address$.
We use $\Values{}$ to the denote the set of all values of data items, $\Values{} = (\bigcup_{t
\in \DataTypes} \Values{t})$ where $\Values{t}$ is the set of values for data items of type $t$.
We use a special element $\bot \in \Values{}$ to denote the undefined value.
A data item may be undefined, for instance, if it has been deleted or it has not been
collected.
The device where a data item is created (its source) is called the \emph{owner} device of the
data item.
We use a function $\owner \colon \DataItems \to \Devices$ to denote the owner
device of a given data item.

\paragraph{Purposes.}
We denote by \Purposes the set of \emph{purposes} and $\popurposes$ the associated partial order. 
For instance, if newsletter is considered as a specific type of advertisement, then we have  $\newsletter \popurposes \advertisement$.

\paragraph{Conditions.}
Privacy policies are contextual: they may depend  on \emph{conditions} on the information stored on the devices on which they are evaluated.
For example,
\begin{inparaenum}[(1)]
  \item\label{p1} \emph{``Only data from adults may be collected''} or
  \item\label{p2} \emph{``Only locations within the city of Lyon may
      be collected from my smartwatch''}
\end{inparaenum}
are examples of policy conditions.
%
% Conditions determine whether a policy can be used given the data in the device.
%
In order to express conditions we use a simple logical language.
Let \Functions denote a set of functions and \emph{terms} $t$ be defined as follows: $t ::= i \mid c \mid f(\vv{t}) $ where $i
\in \DataItems$ is data item, $c \in \Values{}$ is
a constant value, $f \in \Functions$ is a function, and $\vv{t}$ is a list of
terms matching the arity of $f$.
The syntax of the logical language is as follows:
$
  \phi ::= t_1 \ast t_2 \mid \neg \phi \mid \phi_1 \wedge \phi_2 \mid \true \mid \false
$
where $\ast$ is an arbitrary binary predicate, $t_1,t_2$ are terms; \true and \false represent respectively
true and false.
For instance, $\age \geq 18$ and $\watchlocation = \Lyon$ model conditions
(\ref{p1}) and (\ref{p2}), respectively.
We denote the set of well-formed conditions as $\wfc$.
In order to compare conditions, we use a relation, $\der \colon \wfc \times \wfc$.
We write $\phi_1 \der \phi_2$ to denote that $\phi_2$ is stronger than $\phi_1$.
%
% In order to compare conditions, we introduce a \emph{derivability}
% relation, $\der \colon \wfc \times \wfc$.
% %
% We write $\phi_1 \der \phi_2$ to denote that \emph{$\phi_2$ is
%   derivable from $\phi_1$}---intuitively, $\phi_2$ is stronger than $\phi_1$.

%%%%%%%%%%%%%%%%%%%%%
%% Abstract Syntax %%
%%%%%%%%%%%%%%%%%%%%%
\subsection{Abstract syntax of \pilot privacy policies}
\label{subsec:pilot-syntax}

In this section we introduce the abstract syntax of \pilot \emph{privacy
  policies}, or, simply, \pilot \emph{policies}. We emphasize the fact that this abstract syntax is not the syntax  used to communicate with \dss or \dcs. This abstract syntax can be associated with a concrete syntax in a restricted form of natural language. We do not describe this mapping here due to space constraints, but we provide some illustrative examples in Section \ref{subsec:example} and describe a user-friendly interface to define \pilot policies in Section \ref{subsec:usability}.
The goal of \pilot policies is to express the conditions under which data can be
communicated.
We consider two different types of data communications: \emph{data
  collection} and \emph{transfers}.
Data collection corresponds to the collection by a \dc of information directly from a \ds.
A transfer is the event of sending previously collected data to third
parties.
%
%The usage of \pilot policies is twofold.
%
%Firstly, \dss and \dcs use \pilot policies to describe the
%conditions under which a data item of given datatype may be collected and
%transferred.
%
%Secondly, when an agreement is reached, the resulting \pilot policy is
%sent together with a data item.
%
%This \pilot policy defines how the receiver must use the data item and to which
%entities the data item may be transferred.

\begin{definition}[\pilot Privacy Policies Syntax]
Given
$\mathit{Purposes} \in 2^\Purposes$,
$\mathit{retention\_time} \in \nat$,
$\mathit{condition} \in \wfc$,
$\mathit{entity} \in \Entities$ and
$\mathit{datatype} \in \DataTypes$,
the syntax of \pilot policies is defined as follows:

\begin{tabular}{rcl}
  $\pilot \, \mathit{Privacy \, Policy}$             & ::= &
                                                     $(\mathit{datatype}, \dcr, \TR)$ \\
  $\mathit{Data \, Communication \, Rule} \, (\dcr)$ & ::= &
                                                     $\langle \mathit{condition}, \mathit{entity}, \dur \rangle$\\
                                                     $\mathit{Data \, Usage \, Rule} \, (\dur)$         & ::= &
                                                     $\langle \mathit{Purposes}, \mathit{retention\_time} \rangle$ \\
  $\mathit{Transfer \, Rules}$ $(\TR)$               & ::= &
                                                     $\{\dcr_1, \dcr_2, \ldots\}$
\end{tabular}
\end{definition}

We use \wfur, \wfcr, \wfpp{} to denote the sets of data usage rules, data
communication rules and \pilot privacy policies, respectively.
The set of transfer rules is defined as the set of sets of data communication rules, $\TR
\in 2^\wfcr$.
In what follows, we provide some intuition about this syntax and an example of application.
\begin{compactdesc}
\item[Data Usage Rules.] The purpose of these rules is to define the
operations that may be performed on the data.
  $\mathit{Purposes}$ is the set of allowed purposes and
  $\mathit{retention\_time}$ the deadline for erasing the data.
  %
%   After the deadline the data must be erased.
  %
  As an example, consider the following data usage rule,
  $$\dur_1 = \langle \{\research\}, \dataprivacydaynum \rangle.$$
  This rule states that the  data may be used only for the purpose of
  research and may be used until \dataprivacydaynum.

\item[Data Communication Rules.]
  A data communication rule defines the conditions that must be met for the data to be
  collected by or communicated to an entity.
  The outer layer of data communication rules --- i.e., the condition and entity --- should be checked by the sender whereas the data usage rule is to be enforced by the receiver.
  The first element, $\mathit{condition}$, imposes constraints on the data item and the context (state of the \ds device);
  $\mathit{entity}$ indicates the entity allowed to receive the data;
  $\dur$ is a data usage rule stating how $\mathit{entity}$ may use the data.
  For example,
  $ \langle \age > 18, \AdsCom, \dur_1 \rangle. $
  states that data may be communicated to the entity AdsCom which may use it according to $\dur_1$ (defined above).
  It also requires that the data item $\age$ is greater than $18$.
  This data item may be the data item to be sent or part of the contextual information of the sender device.

\item[Transfer Rules.]
  These rules form a set of data communication rules specifying
  the entities to whom the data may be transferred.

\item[\pilot Privacy Policies.]
  \dss and \dcs use \pilot policies to describe how data may be used, collected and transferred.
  The first element, $\mathit{datatype}$, indicates the type of data the policy
  applies to;
  \dcr defines the collection conditions and \TR the transfer rules.
  In some cases, several \pilot policies are necessary to fully capture the
  privacy choices for a given datatype.
  For instance, a \ds may allow only her employer
  to collect her data when she is at work but, when being in a museum, she may allow only the museum.
  In this example, the \ds must define two policies, one for each location.
\end{compactdesc}

\subsection{Example: Vehicle Tracking}
\label{subsec:example}
  In this section, we illustrate the syntax of \pilot with a concrete example that will be continued with the risk analysis in Section \ref{sec:risk-analysis}. 
  
  The use of Automatic Number Plate Recognition (ANPR)~\cite{DSBalprasar13} is
  becoming very popular for applications such as parking billing or pay-per-use roads.
  These systems consist of a set of cameras that automatically recognize plate numbers when vehicles cross the range
  covered by the cameras.
  Using this information, it is possible to determine how long a car
  has been in a parking place or how many times it has traveled on a highway, for example.

  ANPR systems may collect large amounts of mobility data, which
  raises  privacy concerns~\cite{effalpr}.
  When data is collected for the purpose of billing, the consent of the customer is not needed since the legal ground for the data processing can be the performance of a contract. However, certain privacy regulations, such as the GDPR, require prior consent for the use of the data for other purposes, such as commercial offers or advertisement.

  Consider a \dc, Parket, which owns parking areas equipped with ANPR in France.
  Parket is interested in offering discounts to frequent customers.
  To this end, Parket uses the number plates recorded by the ANPR system to send commercial offers to a selection of customers.
  Additionally, Parket transfers some data to its sister company, ParketWW, that operates worldwide with the goal of providing better offers to their customers.
  Using  data for these purposes requires explicit consent from \dss.
  The \pilot policy below precisely captures the way in which Parket wants to collect and use number plates for these purposes.
  \begin{equation}\label{eq:parket-policy}
    \begin{tabular}{rl}
      $(\numplate,$ & $\langle \true, \Parket, \langle \{\commercialoffers\}, \gdprdaynum \rangle \rangle,$\\
      \,            & $\{\langle \true, \ParketWW, \langle \{\commercialoffers\}, \dataprivacydaynum \rangle \rangle\})$
    \end{tabular}      
  \end{equation}
  The condition (\true) in (\ref{eq:parket-policy}) means that Parket does not impose any condition on the number plates it collects or transfers to ParketWW.
  This policy can be mapped into the following  natural
  language sentence:
  \begin{quote}
    \Parket may collect data of type \numplate and use it for \commercialoffers purposes until $\gdprdaynum$.
    
    This data may be transferred to
    \ParketWW which may use it for \commercialoffers purposes until $\dataprivacydaynum$.
  \end{quote}
  The parts of the policy in $\mathit{italic}$ $\mathit{font}$ correspond to the elements of \pilot's abstract syntax.
  These elements change based on the content of the policy.
  The remaining parts of the policy are common to all \pilot policies.
  
  To obtain \dss consent, Parket uses a system which broadcasts the above
  \pilot policy  to  vehicles before they enter the ANPR area. The implementation of this broadcast process is outside the scope of this report; several solutions are presented in 
 \cite{lemetayer:hal-01953052}.
 \dss are therefore informed about Parket's policy before data is
  collected.
  However, \dss may disagree about the processing of their data for these purposes. They can express their own privacy policy in \pilot to define the conditions of their consent (or denial of consent).

  Consider a \ds, Alice, who often visits Parket parkings.
  Alice wants to benefit from the offers that Parket provides in her city (Lyon) but does
  not want her information to be transferred to third-parties.
  To this end, she uses the following \pilot policy:
  \begin{equation}\label{eq:alice-policy}
    \begin{tabular}{rl}
     $(\numplate,$ & $\langle \carlocation = \Lyon, \Parket,$ \\
     & $\langle \{\commercialoffers\}, \gdprdaynum \rangle \rangle, \emptyset)$
    \end{tabular}
  \end{equation}
 In practice, she would actually express this policy as follows:
  \begin{quote}
    \Parket may collect data of type \numplate if \carlocation $\mathit{is \, Lyon}$ and use it for \commercialoffers
    purposes until $\gdprdaynum$.
  \end{quote}
  which is a natural language version of the above abstract syntax policy.
  
  In contrast with Parket's policy, Alice's policy includes a condition
  using \carlocation, which is a data item containing the current location of Alice's car.
  In addition, the absence of transfer statement means that Alice does not allow
  Parket to transfer her data.
  It is easy to see that 
  Alice's policy is more restrictive than Parket's policy.
  Thus, after Alice's device\footnote{This device can be Alice's car on-board computer, which can itself be connected to the mobile phone used by Alice to manage  her privacy policies \cite{lemetayer:hal-01953052}.} receives Parket's policy, it can automatically send an answer
  to Parket indicating that Alice does not give her consent to the collection of her data in the conditions stated in  \Parket's policy. In practice, Alice's policy can also be sent back so that Parket can possibly adjust her own policy to match Alice's requirements. Parket would then have the option to send a new  \dc policy consistent with Alice's policy and Alice would send her consent in return. The new policy sent by Parket can be computed as a join of Parket's original policy and Alice's policy (see Appendix~\ref{sec:policy-join} for an example of policy join which is proven to preserve the privacy preferences of the \ds).
  
  This example is continued in Section \ref{sec:risk-analysis} which illustrates the use of \pilot to enhance Alice's awareness by providing her information about the risks related to her choices of privacy policy.

\section{Abstract execution model}
\label{sec:model}

In this section, we describe the abstract execution model of
\pilot. The purpose of this abstract model is twofold: it is useful to
define a precise semantics of the language and therefore to avoid any
ambiguity about the meaning of privacy policies; also, it is used by
the verification tool described in the next section to highlight
privacy risks.
The definition of the full semantics of the language, which is
presented in a companion paper \cite{pilot-fm-2019}, is beyond the
scope of this report. In the following, we focus on the two main
components of the abstract model: the system state (Section
\ref{subsec:system-state}) and the events (Section
\ref{subsec:semantics}).

%%%%%%%%%%%%%%%%%%
%% System state %%
%%%%%%%%%%%%%%%%%%
\subsection{System state}
\label{subsec:system-state}

We first present an abstract model of a system composed of devices that
communicate information and use \pilot policies to express the privacy
requirements of \dss and \dcs.
Every device has a set of associated policies.
A policy is associated with a device if it was defined in the device
or the device received it.
Additionally, \ds devices have a set of data associated with them.
These data may represent, for instance, the MAC address of the device or
workouts  recorded by the device.
Finally, we keep track of the data collected by \dc devices together
with their corresponding \pilot policies.
The system state is formally defined as follows.

\begin{definition}[System state]\label{def:system-state}
  The system state is a triple $\langle
                                  \dataval,
                                  \policies,
                                  \receiveddata
                                \rangle$
  where:
  \begin{itemize}
  \item
    $\dataval : \Devices \times \DataItems \partialfunction \Values{}$
    is a mapping from the data items of a device to their
    corresponding value in that device.

  \item
    $\policies : \Devices \rightarrow 2^{\Devices \times \wfpp{}}$ is
    a function denoting the \emph{policy base} of a device.
    The policy base contains the policies created by the owner of the
    device and the policies sent by other devices in order to state
    their collection requirements.
    A pair $(d,p)$ means that \pilot policy $p$ belongs to device $d$.
    We write $\PolicySet{d}$ to denote $\policies(d)$.

  \item
    $\receiveddata : \Devices \rightarrow 2^{\Devices \times \DataItems \times
    \wfpp{}}$ returns a set of triples $(s,i,p)$ indicating the data items and
    \pilot policies that a controller has received.
    If $(d',i,p) \in \receiveddata(d)$, we say that device $d$ has received or
    collected data item $i$ from device $d'$ and policy $p$ describes how the
    data item must be used.
    We write $\ReceivedDataSet{d}$ to denote $\receiveddata(d)$.

  \end{itemize}

%   We denote the universe of all possible states as $\States$.\qed
\end{definition}

In Definition~\ref{def:system-state}, \dataval returns the local value
of a data item in the specified device.
However, not all devices have values for all data items.
When the value of a data item in a device is undefined, \dataval returns $\bot$.
The policy base of a device $d$, $\policies_d$, contains the \pilot policies
that the device has received or that have been defined locally.
If $(d,p) \in \policies(d)$, the policy $p$ corresponds to a policy that $d$ has
defined in the device itself.
On the other hand, if $(d,p) \in \policies(e)$ where $d \not = e$, $p$ is a
policy sent from device $e$.
Policies stored in the policy base are used to compare the privacy policies of two
devices before the data is communicated.
The information that a device has received is recorded in \receiveddata.
Also, \receiveddata contains the \pilot policy describing how data must be used.
The difference between policies in \policies and \receiveddata is that
policies in \policies are used to determine whether data can be
communicated, and policies in \receiveddata are used to describe how a
data item must be used by the receiver.

\begin{example}\label{ex:system-state}
  %% Start by describing what are the elements of the system state
  %
  Fig.~\ref{fig:state-example} shows a state composed of two
  devices: Alice's car, and Parket's ANPR system.
  The figure depicts the situation  after Alice's car has
  entered the range covered by the ANPR camera and the collection of her data has already occurred.
  %
%   In what follows we use {\em `` Alice's state''}, and, similarly {\em
%     ``Parket's state''}, to refer to elements of the system state
%   related to Alice (i.e., $\dataval_\Alice$ and $\policies_\Alice$)
%   and Parket (i.e., $\dataval_\Parket$, $\policies_\Parket$ and
%   $\receiveddata_\Parket$).

  % Data subject's state
  %
  The database in Alice's state ($\dataval_\Alice$) contains a data
  item of type number plate \numplatealice whose value is
  \valueplatealice.
  The policy base in Alice's device ($\policies_\Alice$) contains two
  policies: $(\Alice,p_\Alice)$ representing a policy that Alice
  defined, and $(\Parket,$ $p_\Parket)$ which represents a policy
  $p_\Parket$ sent by Parket.
 We assume that $p_\Alice$ and $p_\Parket$ are
  the policies applying to data items of type number plate.

  %% Parket's state
  %
  Parket's state contains the same components as Alice's state with, in addition, a set of received data ($\receiveddata_\Parket$).
  The latter contains the data item \numplatealice collected from Alice  and the \pilot policy $p_\Parket$ that must be applied in order to handle the data.
  %
  %For any data item communication, the \ds must include the policy that the \dc must oblige.
  %
  Note that $p_\Parket$ was the \pilot policy originally defined by
  Parket.
  In order for Alice's privacy to be preserved, it must hold that
  $p_\Parket$ is more restrictive than Alice's \pilot policy
  $p_\Parket$, which is denoted by $p_\Parket \subsumes p_\Alice$.\footnote{See Appendix~\ref{sec:policy-subsumption} for the formal definition of $\subsumes$.}
  This condition can easily be enforced by comparing the policies
  before data is collected.
  The first element in $(\Alice, \numplatealice, p_\Parket)$ indicates
  that the data comes from Alice's device.
  %
  %It is not always the case that data items come from the owner's
  %device, e.g., the data item may be transferred among entities after
  %being collected from a \ds.
  %
  %As a result of receiving this data item, Parket's database includes a copy of \numplatealice with the value \valueplatealice.
  %
  Finally, Parket's policy base has one policy: its own policy
  $p_\Parket$, which was communicated to Alice for data collection.\qed
%   and, $p_\ParketWW$, a policy received from ParketWW. \qed
  %
%   The policy from ParketWW does not need to be related to number plates
%   since this example does not describe further transfers of Alice's
%   data.
  %
 % Note that \dcs do not necessarily need to know \dss\ \pilot
 % policies.
  %
 % Instead, \dcs must comply with  the \pilot policy sent along with the
 % data.
  %
\end{example}

%% dsd

\tikzstyle{abstract}=[rectangle, draw=black, rounded corners, fill=white,
                      text centered, anchor=north, text=black, text width=3.5cm]

\begin{figure}[!t]
\centering
\scalebox{0.7}{
\begin{tikzpicture}
     \node (car)
        {
          \includegraphics{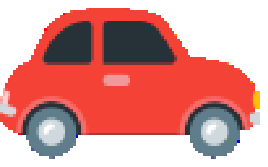}
        };

    \node (piSubject) [abstract,
                       rectangle split,
                       rectangle split parts=3,
                       left=of car,
                       yshift=7mm]
        {
          $\boldsymbol{\policies_\Alice}$
          \nodepart{two}$(\Alice,p_\Alice)$
          \nodepart{three}$(\Parket,p_\Parket)$
        };
    \node (datavalSubject) [abstract,
                             rectangle split,
                             rectangle split parts=2,
                             below=of piSubject,
                             yshift=8mm
                            ]
        {
          $\boldsymbol{\dataval_\Alice}$            
          \nodepart{two}$(\numplatealice,\valueplatealice)$         
        };
    \node (aux1)  [below=of piSubject,
                   yshift=1cm]{};
    % \node (alice) [double,
    %                rounded corners,
    %                below=of omegaSubject,
    %                draw=black,
    %                fill=white,
    %                text centered,
    %                xshift=22mm,yshift=-1.5cm
    %                ]
    %     {
    %       Alice Device
    %     };
    \begin{pgfonlayer}{background}
      \filldraw [line width=4mm,join=round,black!10]
      (piSubject.north  -| piSubject.east)  rectangle (datavalSubject.south  -| datavalSubject.west);
    \end{pgfonlayer}

    %% Data Controller State
    \node (camera-gate) [right=of car,
                         xshift = 2cm,
                         yshift = 0.2cm]
        {
          \includegraphics[width=2cm]{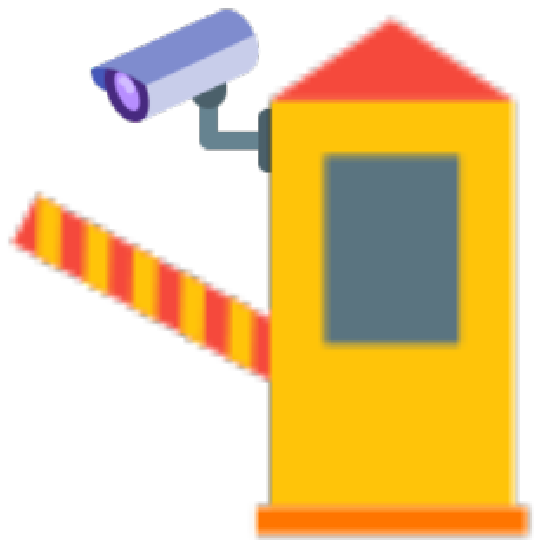}
        };

    \node (datavalController) [abstract,
                               rectangle split,
                               rectangle split parts=2,
                               right=of camera-gate]
        {
            $\boldsymbol{\dataval_\Parket}$
            \nodepart{second}$(\numplatealice,\valueplatealice)$
        };

    \node (piController) [abstract,
                          rectangle split,
                          rectangle split parts=2,
                          above = of datavalController,
                          yshift = -8mm]
        {
            $\boldsymbol{\policies_\Parket}$
            \nodepart{two}$(\Parket,p_\Parket)$
            % \nodepart{three}$(\ParketWW,p_\ParketWW)$
        };

    \node (rhoController) [abstract,
                           rectangle split,
                           rectangle split parts=2,
                           below=of datavalController,
                           yshift=8mm]
        {
            $\boldsymbol{\receiveddata_\Parket}$
            \nodepart{second}$(\Alice,\numplatealice,p_\Parket)$
        };
    \begin{pgfonlayer}{background}
      \filldraw [line width=4mm,join=round,black!10]
      (piController.north  -| piController.east) rectangle (rhoController.south  -| rhoController.west);
    \end{pgfonlayer}

\end{tikzpicture}
}
\caption{Example System State}
\label{fig:state-example}
\end{figure}

%%% Local Variables:
%%% mode: latex
%%% TeX-master: "../main"
%%% End:

%%%%%%%%%%%%%%%%%%%%%%%%%%%%%%%%%%%%%%%%%%%%%%%%%
%% Informal (high-level description) semantics %% 
%%%%%%%%%%%%%%%%%%%%%%%%%%%%%%%%%%%%%%%%%%%%%%%%%
\subsection{System events}\label{subsec:semantics}

In this section we describe the set of events $\Events$ in our
abstract execution model.
We focus on events that ensure that the exchange of data items is done
according to the \pilot policies of \dss and \dcs.

\paragraph{Events.} The set of events $\Events$ is composed by the
following the events: $\request$, $\send$, $\transfer$ and $\use$.
%
%% $\updatepolicy$ and $\updatedataitem$. %% \raul{This events not necessary for this report.}
%
The events \request, \send and \transfer model valid exchanges of
policies and data among \dcs and \dss.
The event \use models correct usage of the collected data by \dcs.
In what follow we explain each event in detail.

%%%%%%%%%%%% TO COMMENT OUT %%%%%%%%%%%%%%%%%%%%%%
% The event \request is used by \dcs to request data items from \dss or
% \dcs.
% %
% \dcs must include their \pilot policy in the request.
% %
% The events \send and \transfer model communication of data items
% between two devices.
% %
% Collection of \ds information by a \dc is modelled by
% $\send$\footnote{
%   %
%   Using the term \send for data collection is motived by the GDPR
%   requirement stating that \dcs to must receive \emph{explicit
%     consent}~\cite[Article 9]{GDPR} from \dss before data processing
%   (including data collection).
%   %
%   Thus, in our system, explicit consent is provided by \dss by
%   ``sending'' the data item and the \pilot policy that describes the
%   conditions they agreed on.
%   %
%   Of course, this event also models passive data collection scenarios
%   such as automatically collecting number plates by an ANPR camera.
%   %
%   In such a case, \send occurs when the car enters the range of the
%   ANPR camera, and the policy that applies to the data must be sent
%   beforehand---e.g., via Bluetooth Low Energy before entering the
%   range of the camera as shown in \cite{lemetayer:hal-01953052}.
%   %
% }, and transfers between \dcs by \transfer.
% %
% In what follows we provide a detailed description of the behaviour of each
% event:
%%%%%%%%%%%% TO COMMENT OUT %%%%%%%%%%%%%%%%%%%%%%

\begin{itemize}
\item[$\request(\sndr,\rcv,\datatype,p)$]
  models request of data from \dcs to \dss or other \dcs.
  Thus, \sndr is always a \dc device, and \rcv may be a \dc or \ds
  device.
  A request includes the type of the data that is being
  requested $\datatype$ and a \pilot policy $p$.
  As expected, the \pilot policy is required to refer to the datatype
  that is requested, i.e., $p = (\datatype,\_,\_)$.
  As a result of executing \request, the pair $(\rcv, p)$ is added to $\PolicySet{\rcv}$.
  Thus, \rcv is informed of the conditions under which \sndr will use
  the requested data.

\item[$\send(\sndr,\rcv,i)$]
  represents the collection by the \dc $\rcv$ of
  a data item $i$ from the \ds $\sndr$.
  %\footnote{We use the term \send for data collection as it intuitively captures the action of \dss providing \emph{explicit consent} to \dcs before data processing (including data collection)---which a requirements in privacy regulations such as the GDPR~\cite[Article 9]{GDPR}. Thus, in our system, explicit consent is provided by \dss by ``sending'' the data item and the \pilot policy that describes the conditions he agreed on. This event also models passive data collection scenarios such as the ANPR system described in Section~\ref{subsec:example}. In such a case, \send occurs when the car enters the range of the ANPR camera, and the policy that applies to the data must have been sent beforehand---i.e., before entering the range of the camera. We describe in detail different possibilities to correctly exchange \pilot policies in the accompanying paper~\cite{lemetayer:hal-01953052}. Here we abstract away from those details since they only important condition, for the analysis presented in this report, is that the \dc policy must always be more restrictive than the \ds policy.}
 %
  In order for \send to be executed, the device \sndr must check that $\PolicySet{\sndr}$ contains:
  \begin{inparaenum}[i)]
  \item an active policy defined by \sndr, $p_\sndr$, indicating how
    \sndr allows \dcs to use her data, and
  \item an active policy sent by \rcv, $p_\rcv$, indicating how she plans to use the data.
  \end{inparaenum}
  A policy is active if it applies to the data item to be sent, to \rcv's entity, the retention time has not yet been reached, and its condition holds.\footnote{\label{fot:trans}See Appendix~\ref{sec:active-policy-transfer} for the formal definition of active policy and active transfer.}
  Data can only be sent if $p_\rcv$ is more restrictive than $p_\sndr$ (i.e., $p_\rcv \subsumes p_\sndr$), which must be checked by \sndr. 
  We record the data exchange in $\ReceivedDataSet{\rcv}$ indicating:
  the sender, the data item and  \rcv's \pilot policy,
  $(\sndr,i,p_\rcv)$.
  We also update \rcv's database with the value of $i$ in \sndr's
  state, $\dataval(\rcv,i) = \dataval(\sndr,i)$.

\item[$\transfer(\sndr,\rcv,i)$]
  is executed when a \dc\ (\sndr) transfers a data item $i$ to
  another \dc\ (\rcv).
  First, \sndr checks whether $\PolicySet{\sndr}$ contains an active policy, from \rcv, $p_\rcv$.
  Here we do not use a \pilot policy from \sndr, instead we use the
  \pilot policy $p$ sent along with the data---defined by the owner of
  $i$.
  Thus, \sndr must check whether there exists an active transfer rule (\tr) in the set
  of transfers rules of the \pilot policy $p$.
  As before, \sndr must check that the policy sent by $\rcv$ is more
  restrictive than those originally sent by the owner of the data,
  i.e., $p_\rcv \subsumes p_\tr$ where $p_\tr$ is a policy with the
  active transfer $\tr$ in the place of the data communication rule
  and with the same set of transfers as $p$.
  Note that data items can be transferred more than once to the entities in the set of transfers as long as the retention time has not been reached.
  This is not an issue in terms of privacy as data items are constant values.
  In the resulting state, we update $\ReceivedDataSet{\rcv}$ with the
  sender, the data item and \rcv's \pilot policy,
  $(\sndr,i,p_{\rcv})$.
  Note that, in this case, the owner of the data item is not \sndr
  since transfers always correspond to exchanges of previously
  collected data, $\owner(i) \not = \sndr$.
  The database of \rcv is updated with the current value of $i$ in
  $\dataval_\sndr$.

\item[$\use(\dev,i,\pur)$] models the use of a data item $i$ by a \dc
  device $\dev$ for purpose $\pur$.
  Usage conditions are specified in the data usage rule of the policy
  attached to the data item, denoted as $p_i$, in the set of received
  data of \dev, \ReceivedDataSet{\dev}.
  Thus, in order to execute \use we require that:
  \begin{inparaenum}[i)]
  \item the purpose \pur is allowed by $p_i$, and
  \item the retention time in $p_i$ has not elapsed.
  \end{inparaenum}
\end{itemize}

\section{Risk Analysis}\label{sec:risk-analysis}

As described in the introduction, an effective way to enhance informed consent is to raise user awareness about the risks related to personal data collection. 
Privacy risks may result  from different sorts of misbehavior such as the use of data beyond the allowed purpose  or the transfer of  data to unauthorized third parties \cite{sjd-dlm-book}.

In order to assess the risks related to a given privacy policy, we need to rely on  assumptions about potential risk sources, such as:
\begin{itemize}
  \item Entities $e_i$ that may have a strong interest to use data of type $t$ for a given purpose $\pur$.
  \item Entities $e_i$ that may have facilities and interest to transfer data of type $t$ to other entities $e_j$.
\end{itemize}

In practice, some of these assumptions may be generic and could be obtained from databases populated by pairs or NGOs based on history of misconducts by companies or business sectors. Others risk assumptions can be specific to the DS (e.g., if she fears that a friend may be tempted to transfer certain information to another person).
Based on these assumptions, a DS who is wondering whether she should add a policy $p$ to her current set of policies can ask questions such as: ``if I add this policy $p$:

\begin{itemize}
  \item Is there a risk that my data of type $t$ is used for purpose $\pur$?
  \item Is there a risk that, at some stage, entity $e$ gets my data of type $t$? ''
\end{itemize}

In what follows, we first introduce our approach to answer the above questions (Section~\ref{subsec:arr}); then we illustrate it with the example introduced in Section~\ref{subsec:example}  (Section~\ref{subsec:case}) and we present a user-friendly interface to define and analyze privacy policies (Section~\ref{subsec:usability}).

\subsection{Automatic Risk Analysis with \spin}
\label{subsec:arr}
%% Explain (briefly) what SPIN and Promela are
%
In order to automatically answer questions of the type described above,
we use the verification tool \spin~\cite{spin}.
\spin belongs to the family of verification tools known as
\emph{model-checkers}.
A model-checker takes as input a model of the system (i.e., an abstract
description of the behavior of the system) and a set of properties
(typically expressed in formal logic), and checks whether the model of
the system satisfies the properties.
In \spin, the model is written in the modeling language
\promela~\cite{spin} and properties are encoded in \emph{Linear
  Temporal Logic} (LTL)~(e.g., \cite{ltl}).
We chose \spin as it has successfully been used in a variety of contexts~\cite{MGLpaactavlpp06}.
However, our methodology is not limited to \spin and any other formal verification tool
such as SMT solvers~\cite{smt} or automated theorem provers~\cite{atp}
could be used instead.

Our approach consists in defining a \promela model for the \pilot events and
privacy policies, and translating the risk analysis questions into LTL
properties that can be automatically checked by \spin. For example, the question \emph{``Is there a risk that Alice's data is used for the purpose of profiling by ParketWW?''} is translated into the LTL property \emph{``ParketWW never uses Alice's data for profiling''}.
Devices are modeled as processes that randomly try to
execute events defined as set forth in Section~\ref{subsec:semantics}.

In order to encode the misbehavior expressed in the assumptions, we add ``illegal'' events to the set of events that devices can
execute.
For instance, consider the assumption \emph{``use of data beyond the
  allowed purpose''}.
To model this assumption, we introduce the event
 $\illegaluse$, which behaves as $\use$, but disregards
 the purpose of the \ds
policy for the data.
%
% Assumptions such as \emph{``may have facilities and interest to
%   transfer data to other entities''} can similarly be
% modelled using an event, \illegaltransfer.
% %
% This event behaves as \transfer, but disregards the \ds policy
% with respect to data transfers.

\spin explores all possible sequences of executions of
events (including misbehavior events) trying to find a sequence that violates the LTL property.
If no sequence is found, the property cannot be violated, which means that the risk corresponding to the property cannot occur.
If a sequence is found, the risk corresponding to the property can occur, and \spin returns the sequence of events that leads to the violation.
This sequence of events can be used to further clarify the cause of the violation.

\subsection{Case Study: Vehicle Tracking}
\label{subsec:case}
We illustrate our risk analysis technique with the vehicle tracking example
 introduced in Section~\ref{subsec:example}.
We first define the \promela model and the assumptions on the entities
involved in this example.
The code of the complete model is available in~\cite{promelamodel}.

%% Give details of promela model
\paragraph{Promela Model.}

We define a model involving the three entities identified in Section~\ref{subsec:example} with, in addition, the car insurance company $\CarInsure $ which is identified as a potential source of risk related to $\ParketWW$, i.e., 
$\Entities = \{\Alice, $ $\Parket,$ $\ParketWW,$ $\CarInsure\}$.
Each entity  is associated with a single device: 
$\Devices = \Entities$ and $\devtoent(x) = x$ for
$x \in \{\Alice, $ $\Parket,$ $\ParketWW,$ $\CarInsure\}$.
We focus on one datatype $\DataTypes = \{\numplate\}$ with its set of values
defined as $\Values{\numplate} =$ $\{\valueplatealice\}$.
We consider a data item $\numplatealice$ of type $\numplate$ for which \Alice is the owner.
Finally, we consider a set of purposes
$\Purposes = \{\commercialoffers,$ $\profiling\}$. 
% where
% newsletter is a specific type of advertisement, i.e.,
% $\newsletter \popurposes \advertisement$.
% %
% The remaining partial orders (\poentities\ and \podata) are
% antichains.

\paragraph{Risk assumptions on entities.}
In this case study, we consider two risk assumptions:
\begin{enumerate}
  \item ParketWW may transfer personal data to CarInsure disregarding
  the associated \ds  privacy policies.
  \item CarInsure has strong interest in using personal data for profiling.
\end{enumerate}
In practice, these assumptions, which are not specific to Alice, may be obtained automatically from databases populated by pairs or NGOs for example.

\paragraph{Set of events.}
The set of events that we consider is derived from the risk assumptions on entities.
%
% In this model, we assume that: Parket can request and collect data
% from Alice, ParketWW can request data to Parket, and Parket can transfer
% data to ParketWW.
%
On the one hand, we model events that behave correctly, i.e., as described in Section~\ref{subsec:semantics}.
In order to model the worst case scenario in terms of risk analysis, we consider that: the \dcs in this case study (i.e., Parket, ParketWW and CarInsure) can request data to any entity (including Alice), the \dcs can collect Alice's data, and the \dcs can transfer data among them.
On the other hand, the risk assumptions above are modeled as two events: ParketWW may transfer data to CarInsure disregarding Alice's policy, and CarInsure may use Alice's data for profiling even if it is not allowed by Alice's
policy.
%
% As a consequence, the following events may occur:
%
Let $\mathit{\dcvar},\mathit{\dcvar'} \in \{\Parket, \ParketWW, \CarInsure\}$, the following events may occur:
\begin{itemize}
  \item $\request(\dcvar,\Alice,\numplate,p)$ - A \dc requests a
    number plate from Alice and $p$ is the \pilot policy of the \dc.
  \item $\request(\dcvar,\dcvar',\numplate,p)$ - A \dc requests data items of type number plate from another \dc and $p$ is the \pilot policy of the requester \dc.
  \item $\send(\Alice,\dcvar,i)$ - Alice sends her item $i$ to a \dc.
  \item $\transfer(\dcvar,\dcvar',i)$ - A \dc transfers a previously received item $i$ to another \dc.
  \item $\illegaltransfer(\ParketWW,\CarInsure,i)$ - ParketWW transfers a
    previously received item $i$ to CarInsure disregarding the associated \pilot policy defined by the owner of $i$.
  \item $\illegaluse(\CarInsure,i,\profiling)$ - CarInsure uses data item
    $i$ for profiling disregarding the associated privacy policy defined by the owner of $i$.
\end{itemize}
% \begin{itemize}
%   % 
%   \item $\request(\Parket,\Alice,\numplate,p)$ - Parket requests a
%     number plate from Alice and $p$ is the \pilot policy of Parket.
%   %
%   \item $\request(\ParketWW,\Parket,\numplate,p)$ - ParketWW requests data items of type number plate from Parket and $p$ is the \pilot policy of ParketWW.
%   %
%   \item $\send(\Alice,\Parket,i)$ - Alice sends her item $i$ to
%     Parket. 
%     %This event corresponds to the explicit consent of Alice to the privacy policy received from Parket.
%   % 
%   \item $\transfer(\Parket,\ParketWW,i)$ - Parket transfers a previously
%     collected item $i$ to ParketWW.
%   %
%   \item $\illegaltransfer(\ParketWW,\CarInsure,i)$ - ParketWW transfers a
%     previously received item $i$ to CarInsure disregarding the associated privacy policy defined by the owner of $i$.
%   %
%   \item $\illegaluse(\CarInsure,i,\profiling)$ - CarInsure uses data item
%     $i$ for profiling disregarding the associated privacy policy defined by the owner of $i$.
%   %
% \end{itemize}

\paragraph{Alice's policies.}
In order to illustrate the benefits of our risk analysis
approach, we focus on the following two policies that Alice may consider.
$$
\begin{array}{rl}
  \palicetrans = & (\numplate, \langle
                  \true,
                  \Parket,
                  \langle
                    \{\commercialoffers\},
                    \gdprdaynum
                  \rangle
                \rangle,\\
                &\{
                  \langle
                    \true,
                    \ParketWW,
                    \langle
                      \{\commercialoffers\},
                      \dataprivacydaynum
                    \rangle
                  \rangle\}).
\end{array}
$$
$$
\palicenotrans = (\numplate, \langle \true, \Parket, \langle \{\commercialoffers\}, \gdprdaynum \rangle \rangle, \emptyset).
$$
The policy $\palicetrans$ states that Parket can collect data of type
number plate from Alice, use it for commercial offers and keep it until
\gdprdaynumnonit.
It also allows Parket to transfer the data to ParketWW.
ParketWW may use the data for commercial offers and keep it until
\dataprivacydaynumnonit.
The policy $\palicenotrans$ is similar to $\palicetrans$ except that it does not
allow Parket to transfer the data. We assume that Alice has not yet defined any other privacy policy concerning Parket and ParketWW.

\paragraph{Parket's policy.}
We set Parket's privacy policy equal to Alice's.
By doing so, we consider the worst case scenario in terms of privacy risks because it allows Parket to collect Alice's
data and use it in all conditions and for all purposes allowed by Alice.

\paragraph{ParketWW's policy.}
%
% For AdsCom, as opposed to Parket, we included an assumption stating
% that Nozama has a strong interest in using Alice's data for
% advertisement.
%
% As for retention time, we use the same as stated by Alice.
%
Similarly, ParketWW's policy is aligned with the transfer rule in $\palicetrans$:
$$
\begin{array}{rl}
  p_\ParketWW = & (\numplate, \langle
                  \true,
                  \ParketWW,
                  \langle 
                   \{\commercialoffers\},
                    \dataprivacydaynum
                  \rangle
                \rangle,\emptyset).
\end{array}
$$
The above policy states that ParketWW may use data of type  number plate for commercial offers and keep it until \dataprivacydaynum.
It also represents the worst case scenario for risk analysis, as it
matches the preferences in Alice's first policy.

\subsection*{Results of the Risk Analysis}
Table~\ref{tab:risk-analysis} summarizes some of the results of the application of our \spin risk analyzer on this example.
%
%The first column of Table~\ref{tab:risk-analysis} contains questions that our risk analysis comprises.
%
%Note that these questions correspond to the generic questions we
%mentioned at the beginning of the section.
%
%For instance, the question \emph{Can Parket receive Alice's data?},
%corresponds to the generic question \emph{Is there a risk that, at
%  some stage, entity $e$ has my data of type $t$?}.
%
The questions in the first column have been translated into LTL properties used by
\spin (see~\cite{promelamodel}).
%In the former case, \spin returns a counter-example together with the
%answer.
% DLM: isn't it the opposite ?? yes, fixed.
%The counter-example consists of the sequence of events that leads to the
%violation of the property.
%
%When \spin replies negatively, no counter-example is produced.
%
The output of \spin appears in columns 2 to 5.
The green boxes indicate that the output is in accordance with Alice's policy while red boxes correspond to violations of her policy.

Columns 2 and 3 correspond to executions of the system involving correct events, considering respectively \palicetrans and \palicenotrans as Alice's policy.
As expected, all these executions respect Alice's policies.

Columns 3 and 4 consider executions involving \illegaltransfer and \illegaluse.
These columns show the privacy risks taken by  Alice based on the above risk assumptions.
%
% When results correspond to the preferences in Alice's policy we
% colour the cell in green.
%
% Likewise, when results are inconsistent with Alice's preferences we
% colour the cell in red.
%
% In what follows we discuss the results of the last two columns which describe
% the results of the risk analysis in the presence of the illegal events.
%
Rows 3 and 6 show respectively that CarInsure may get Alice's data and use it for profiling.
In addition, the counterexamples generated by \spin, which are not pictured in the table, show that this can happen only after ParketWW executes \illegaltransfer.

From the results of this privacy risk analysis Alice may take a better informed decision about the policy to choose.
In a nutshell, she has three options:
\begin{enumerate}
\item Disallow Parket to use her data for commercial offers, i.e., choose to add neither \palicetrans nor \palicenotrans to her set of policies (Parket will use the data only for  billing purposes, based on contract).
\item Allow Parket to use her data for commercial offers without transfers to ParketWW, i.e., choose \palicenotrans.
\item Allow Parket to use her data for commercial offers and to transfer to ParketWW, i.e., choose \palicetrans.
\end{enumerate}
Therefore, if Alice wants to receive commercial offers but does not want to take the risk of being profiled by an insurance company, she should take option two.

\newcolumntype{C}[1]{>{\centering\arraybackslash}p{#1}}
\newcolumntype{M}[1]{>{\centering\arraybackslash}m{#1}}
\begin{table}[!t]
  \centering
  \begin{tabular}{|M{3.7cm}|M{1.8cm}|M{2.1cm}|M{1.8cm}|M{2.1cm}|}
    \hline
    \multirow{2}{*}{Question} & %%LTL Property &
    \multicolumn{2}{c|}{Normal behavior} &
    \multicolumn{2}{c|}{Misbehavior Assumptions} \\
    % Normal Behaviour $\palicenotrans$&
    % Illegal $\palicetrans$ &   
    % Illegal $\palicenotrans$ \\
    \cline{2-5}
    & %%LTL Property &
    $\palicetrans$ &
    $\palicenotrans$&
    $\palicetrans$ &   
    $\palicenotrans$ \\
    \hline
    Can Parket receive Alice's data? &
    \cellcolor{green!50} Yes &
    \cellcolor{green!50} Yes &
    \cellcolor{green!50} Yes &
    \cellcolor{green!50} Yes \\
    \hline
    % Can Parket collect Alice's data? &
    % \cellcolor{green!50} Yes &
    % \cellcolor{green!50} Yes &
    % \cellcolor{green!50} Yes &
    % \cellcolor{green!50} Yes \\
    % \hline
    % Can Alice's data be transferred to Parket? &
    % \cellcolor{green!50} No &
    % \cellcolor{green!50} No &
    % \cellcolor{green!50} No &
    % \cellcolor{green!50} No \\
    % \hline
    Can ParketWW receive Alice's data? &
    \cellcolor{green!50} Yes &
    \cellcolor{green!50} No &
    \cellcolor{green!50} Yes &
    \cellcolor{green!50} No \\
    \hline
    % Can ParketWW collect Alice's data? &
    % \cellcolor{green!50} No &
    % \cellcolor{green!50} No &
    % \cellcolor{green!50} No &
    % \cellcolor{green!50} No \\
    % \hline
    % Can Alice's data be transferred to ParketWW? &
    % \cellcolor{green!50} Yes &
    % \cellcolor{green!50} No &
    % \cellcolor{green!50} Yes &
    % \cellcolor{green!50} No \\
    % \hline
    Can CarInsure receive Alice's data? &
    \cellcolor{green!50} No &
    \cellcolor{green!50} No &
    \cellcolor{red!50} Yes &
    \cellcolor{green!50} No \\
    \hline
    % Can CarInsure collect Alice's data? &
    % \cellcolor{green!50} No &
    % \cellcolor{green!50} No &
    % \cellcolor{green!50} No &
    % \cellcolor{green!50} No \\
    % \hline
    % Can Alice's data be transferred to CarInsure? &
    % \cellcolor{green!50} No &
    % \cellcolor{green!50} No &
    % \cellcolor{red!50} Yes &
    % \cellcolor{green!50} No \\
    % \hline
    Can Parket use Alice's data for other purpose than commercial offers? &
    \cellcolor{green!50} No &
    \cellcolor{green!50} No &
    \cellcolor{green!50} No &
    \cellcolor{green!50} No \\
    \hline
    Can ParketWW use Alice's data for other purpose than commercial offers? &
    \cellcolor{green!50} No &
    \cellcolor{green!50} No &
    \cellcolor{green!50} No &
    \cellcolor{green!50} No \\
    \hline
    Can CarInsure use Alice's data for profiling? &
    \cellcolor{green!50} No &
    \cellcolor{green!50} No &
    \cellcolor{red!50} Yes &
    \cellcolor{green!50} No \\
    \hline
    % Can Parket or AdsCom use Alice's data for longer than she specified in her policy? &
    % \cellcolor{green!50} No &
    % \cellcolor{green!50} No &
    % \cellcolor{red!50} Yes, Parket and AdsCom &
    % \cellcolor{red!50} Yes, Parket and AdsCom \\
    % \hline

  \end{tabular}
  \caption{Risk Analysis of Alice's policies $\palicetrans$ and
    $\palicenotrans$. Red boxes denote that Alice's policy is
    violated. Green boxes denote that Alice's policy is respected.}
  \label{tab:risk-analysis}
\end{table}

\subsection{Usability}\label{subsec:usability}

In order to show the usability of the approach, we  have developed a web application to make it possible for users with no technical background to perform risk analysis as outlined in Section~\ref{subsec:case} for the ANPR system.
The web application is available online at:  \url{http://pilot-risk-analysis.inrialpes.fr/}.

Fig.~\ref{fig:web-app-risk-analysis} shows the input forms of the web application.
First, \dss have access to a user-friendly form to input \pilot policies.
In the figure we show an example for the policy \palicetrans.
Then \dss can choose the appropriate risk assumptions from the list generated by the system.
Finally, they can ask questions about the potential risks based on these assumptions.
When clicking on ``Verify!'', the web application runs SPIN to verify the LTL property corresponding to the question.
The text ``Not Analyzed'' in grey is updated with ``Yes'' or ``No'' depending on the result.
The figure shows the results of the three first questions with \palicetrans and no risk assumption chosen (first column in Table~\ref{tab:risk-analysis}).

\begin{figure}[!t]
\centering
\includegraphics[width=0.9\textwidth]{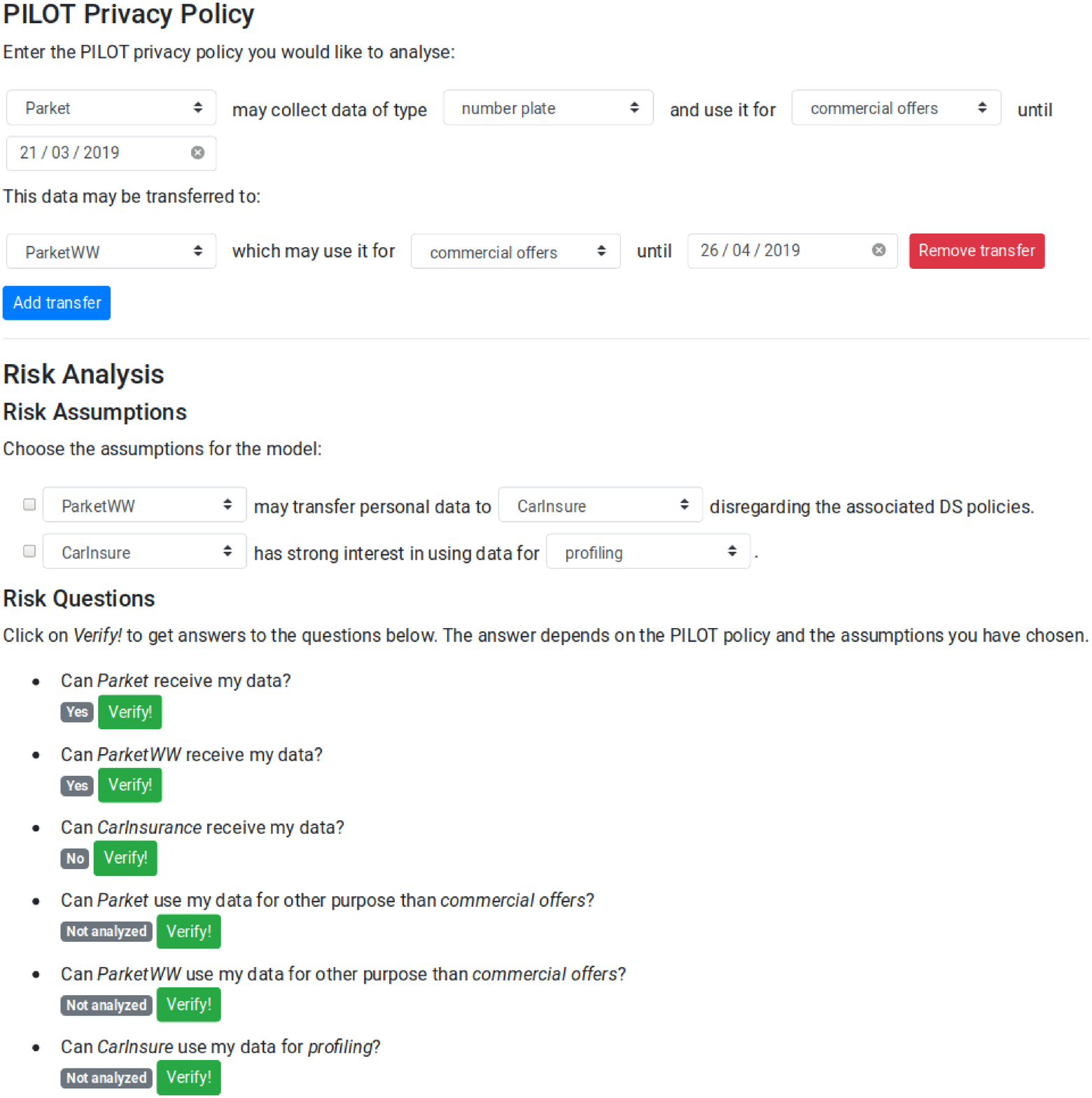}
\caption{Input Forms of Risk Analysis Web Application.}
\label{fig:web-app-risk-analysis}
\end{figure}

The web application is tailored to the ANPR case study we use throughout the report.
The \promela model and the policies defined in Section~\ref{subsec:case} are implemented in the application. This prototype can be generalized in different directions, for example by allowing users to enter specific risk assumptions on third parties. The range of questions could also be extended to include questions such as ``Can $X$ use $Y$'s data for other purpose than $\pur$? 
%
%In general, it is difficult to provide a user-friendly web application to let \dss specify the all elements of the model, risk assumptions, and questions.
%
%Furthermore, this is not the goal of this web application, since it would place too much burden on \dss who might not have the technical background to define all those parameters.
%
%Nevertheless, the approach we follow can be easily be applied to other case studies.
%
%For instance, system experts working in data protection agencies or NGOs may define the input form for \pilot policies, \promela model, risk assumptions and questions, and adapt the user interface to the options of their risk analysis.
%
The code of the web application is available at~\cite{promelamodel}.

\section{Benefits of \pilot for the implementation of the GDPR}\label{sec:benefits-gdpr}

In this Section, we sketch the benefits of the use of \pilot in the context of the GDPR.
First, we believe that the adoption of a language like \pilot would contribute to reduce the imbalance of powers between \dcs and \dss without introducing prohibitive costs or unacceptable constraints for \dcs.
In addition, it can be used as a basis for a more effective consent mechanism as advocated by the WP29 in his opinion on the IoT.\footnote{``In practice, today, it seems that sensor devices are usually designed neither to provide information by themselves nor to provide a valid mechanism for getting the individual's consent. Yet, new ways of obtaining the user's valid consent should be considered by IoT stakeholders, including by implementing consent mechanisms through the devices themselves. Specific examples, like privacy proxies and sticky policies, are mentioned later in this document.''~\cite{wp29-eprivacy-2017}}
As demonstrated in \cite{lemetayer:hal-01953052}, \dss consents expressed through \pilot privacy policies can be produced automatically by a privacy agent implementing the abstract execution model sketched in Section 3. This agent can interact with the DS only in the cases  not foreseen by his privacy policy, for example a request from an unknown type of DC or for an unknown type of purpose. This makes it possible to reduce user fatigue while letting \dss in control of their choices.

In addition to the risk analysis described in the previous section, \pilot can be associated with verification tools to detect certain forms of non-compliance of DC privacy policies with respect to the GDPR. Examples of non-compliance include inconsistencies between the retention time and the purpose or between the purpose and the type of data. This would require the availability of a database of standard purposes and associated data types and retention times. In Europe, such a database could be provided, for example, by Data Protection Authorities or by the European Data Protection Board. It is also possible to detect privacy policies involving sensitive data\footnote{``Special categories of personal data'' as defined by Article 9 of the GDPR.} for which a stronger form of consent is required \cite{wp29-consent-2017}.

Since \pilot is defined through a precise execution model, the enforcement of  privacy policies can also be checked by a combination of means :
\begin{itemize}
  \item A priori (or ``static'')  verification of global properties such as ``No collection of data can take place by a DC if the DS has not previously received  the required information from this DC'' \cite{lemetayer:hal-01953052}. This particular property expresses a requirement of the GDPR regarding informed consent.
  \item On the fly (or ``dynamic'') verification of properties such as transfers to authorized third parties only. 
  \item A posteriori verification of properties in the context of audits. This type of verification can be implemented as an analysis of the execution traces (or logs) of the DC to support accountability.
  \end{itemize}
  
  It should be clear however, that we do not claim that all requirements of the GDPR regarding information and consent can be checked or even expressed using a privacy policy language like  \pilot.  For example, the general rules about imbalance of power or detriment stated by the WP29 \cite{wp29-consent-2017} are not prone to formal definition or automatic verification: they need to be assessed by human beings. Similarly, the fact that the \dcs authorized to receive the data and the allowed purpose are specific enough is a matter of appreciation rather than formal proof, even though an automatic verifier could rely on predefined databases of standard purposes and datatypes. The only claim that we make here is that most of what can be encapsulated into a machine-readable language and checked by a computer is included in \pilot. We compare more precisely \pilot with previous work in the next section.

%%%%%%%%%%%%%%%%%%%%%%%%%%%%%%%%%%%%%%%
%%  Comparison with other languages  %%
%%%%%%%%%%%%%%%%%%%%%%%%%%%%%%%%%%%%%%%
%\input{comparison-other-languages}

%%%%%%%%%%%%%%%%%%
%% Related Work %%
%%%%%%%%%%%%%%%%%%
\section{Related Work}\label{sec:related-work}

%%% Privacy languages
Several languages or frameworks dedicated to privacy policies have been proposed.
A pioneer project in this area was the ``Platform for Privacy
Preferences'' (P3P)~\cite{RCppp99}.
P3P makes it possible to express notions such as purpose, retention time and
conditions.
However, P3P is not really well suited to the IoT as it was conceived as a policy language for websites.
Also, P3P does not offer support for defining data transfers.
Other languages close to P3P have been proposed, such as the ``Enterprise Policy Authorization Language'' (EPAL)~\cite{ashley2003enterprise} and ``An Accountability Policy Language'' (A-PPL)~\cite{appl}.
%
%P3P, EPAL and A-PPL enforcement mechanisms are based on a request
%evaluation engine.
%
%Therefore, the enforcement mechanism does not take into account how
%data will be used by the requester after collection.
%
%%%% NOT NECESSARY FOR VALID CONSENT, THIS ACTIONS ARE MANDATORY
% Additionally, these languages support for the definition of
% \emph{actions} that must executed upon request, e.g., ``notify user''
% or ``log request''.
% %
% EPAL and A-PPL have temporal actions that can be used, e.g., to
% enforce that data is erased when the retention time has elapsed.
% %
% In \pilot this actions are built-in in the system model and events
% since we focus only in the actions required by the GDPR.
%
% In order words, the semantics for enforcing parts of the policies is
% undefined.
%
%Limited support, by means of temporal obligations, is offered by EPAL
%and A-PPL to enforce retention time.
%
The lack of a precise execution model for these languages may also give rise to  ambiguities and variations in their implementations.

Recently, Gerl~\etal proposed LPL~\cite{LPLgbkb18} which is inspired
by GDPR requirements.
%
%LPL syntax includes the notions of purpose, retention time, and
%transfer.
%However, it does not contain conditional rules.
%
%The architecture of the system consists in a centralised database
%where data is stored.
%
%Data can be anonymised and deleted when the retention time has
%elapsed.
%
However, the lack of conditions and its centralized architecture makes LPL not suitable for IoT environments which are inherently distributed.
%
%As  previous languages, LPL is based on a request evaluation engine, which
%does not guarantee that data will be used correctly by the requester.
%
%For accountability reasons, transfers and queries are recorded in this
%central database.
%
%In \pilot, accountability is achieved by requiring \dcs policies %before
%any data collection or transfer.
%
%Therefore, the sender of the data always has a proof of the policy she
%agreed to.

%%% Risk analysis
None of the above works include tools to help users understand the privacy risks associated with a given a policy, which is a major benefit of \pilot  as discussed in Section~\ref{sec:risk-analysis}.
In the same spirit, Joyee De~\etal~\cite{sjd-dlm} have proposed a methodology where \dss can visualize the privacy risks associated to their privacy settings.
Here the authors use harm trees to determine the risks associated with privacy settings.
The main difference with \pilot is that harm trees must be manually defined for a given application whereas we our analysis is fully automatic.\footnote{Only risk assumptions must be defined, which is useful to answer different ``what-if'' questions.}

%% Formal Languages
Another line of work is that of formal privacy languages.
Languages such as S4P~\cite{S4Pbmb10} and SIMPL~\cite{Mfpmf08} define
unambiguously the behavior of the system---and, consequently,
the meaning of the policies---by means of trace semantics.
The goal of this formal semantics is to be able to prove global correctness properties such as ``\dcs always use \ds data according to their policies''.
While this semantics is well-suited for its intended purpose, it cannot be directly used to develop policy enforcement mechanisms. 
In contrast, we provide a \promela model in
Section~\ref{sec:risk-analysis}---capturing the execution model of
\pilot (cf. Section~\ref{sec:model})---that can be used as a reference to implement a system for the
enforcement for \pilot policies.
In addition, these languages, which were proposed before the adoption of the GDPR, were not conceived with its requirements in mind.

% Languages to specify privacy regulations
Other languages have been proposed to specify privacy regulations such as
HIPAA, COPAA and GLBA.
For instance, CI~\cite{BDMNpcifa06} is a dedicated linear temporal
logic based on the notion of contextual integrity.
CI has been used to model certain aspects of regulations such as HIPAA, COPPA and GLBA.
Similarly, PrivacyAPI~\cite{MGLpaactavlpp06} is an extension of the
access control matrix with operations such as notification and
logging.
The authors also use a \promela model of HIPAA to be able to
verify the ``correctness'' and better understand the regulation.
PrivacyLFP~\cite{DGJKDelshgpl10} uses first-order fixed point logic to
increase the expressiveness of previous approaches.
Using PrivacyLFP, the authors formalize HIPAA and GLBA with a higher
degree of coverage than previous approaches.
The main difference between \pilot and these languages is their focus.
\pilot is focused on modeling \dss and \dcs privacy policies and enhancing \dss awareness
whereas these languages  focus on modeling regulations.

%% Access control languages
Some access control languages such as XACML~\cite{xacml} and
RBAC~\cite{SCFYrbacm96} have been used for the specification of privacy
policies.
%
%However, the expressiveness of these languages is  limited.
%
Typically, policies include the datatype to which they apply,
and a set of agents with privileges to perform certain actions---such as accessing the data.
Some extensions such as GeoXACML~\cite{geoaxcml} include
conditions depending on geolocation information.
However, none of these languages captures concepts such as
retention time, purpose or transfers.% that are required in the GDPR.

%%% Usage control
Usage control (UCON)~\cite{PSuucm04,PHBduc06} appeared as an extension of access control to express how the data may be used after being accessed.
To this end, it introduces \emph{obligations}, which are actions such as ``do not transfer data item $i$''.
The Obligation Specification Language (OSL)~\cite{HPBSWplduc07} is an example of enforcement mechanism  through digital right management systems.
However, UCON does not offer any support to compare policies and does not differentiate between \dss and \dcs policies, which is a critical feature  in the context of privacy policies.  For \dss to provide an informed consent, they should know whether \dcs policies comply with their own policies. 
%
% UCON policies may be used to enforce \pilot's data usage rules.

Some work has also been done on privacy risk analysis~\cite{sjd-dlm-book}, in particular to address the needs of the GDPR regarding Privacy Impact Assessments. We should emphasize that the notion of risk analysis used in this report is different in the sense that it applies to potential risks related to privacy policies rather than systems or products. Hence, the risk assumptions considered here concern only the motivation, reputation and potential history of misbehavior of the parties (but not the vulnerabilities of the systems, which are out of reach and expertise of the data subjects). 

%%% Local Variables:
%%% TeX-master: "../main.tex"
%%% End:

%%%%%%%%%%%%%%%%
%% Conclusion %%
%%%%%%%%%%%%%%%%
\section{Conclusion}\label{sec:conclusion}

%%% Summary of contributions
In this report, we have presented the privacy policy language \pilot, and a novel approach to analyzing privacy policies which is focused on enhancing informed consent.
%In this paper, we have focused on the benefit of the approach to enhance informed consent. 
An advantage of a language like \pilot is the possibility to use it as a basis to implement ``personal data managers'', to enforce privacy policies automatically, or ``personal data auditors'', to check a posteriori that a \dc has complied with the \ds policies associated with all the personal data that it has processed.
Another orthogonal challenge in the context of the IoT is to ensure that \dss are always informed about the data collection taking place in their environment and can effectively communicate their consent (or objection) to the surrounding sensors. Different solutions to this problem have been proposed in~\cite{lemetayer:hal-01953052} relying on \pilot as a privacy policy language used by \dcs to communicate their policies and \dss to provide their consent. These communications can either take place directly or indirectly (through registers in which privacy policies can be stored).

The work described in this report can be extended in several directions. First, the risk analysis model used here is simple and could be enriched in different ways, for example by taking into account risks of inferences between different types of data. The evaluation of these risks could be based  on past experience and research such as the study conducted by Privacy International.\footnote{\url{https://privacyinternational.org/sites/default/files/2018-04/data\%20points\%20used\%20in\%20tracking_0.pdf}} 
The risk analysis could also involve the history of the \ds (personal data already collected by \dcs in the past). On the formal side, our objective is to use a formal theorem prover to prove global properties of the model. This formal framework could also be used to implement tools to verify that a given enforcement system complies with the \pilot policies. 

% \raul[inline]{Add paragraph with summary of old section on benefits to GDPR}

%%%%%%%%%%%%%%%%%%
%%  References  %%
%%%%%%%%%%%%%%%%%%
% \bibliographystyle{plain}
\bibliographystyle{splncs04}
\bibliography{references}

%%%%%%%%%%%%%%%%
%%  Appendix  %%
%%%%%%%%%%%%%%%%
\appendix
\section*{Appendix}

\section{Policy subsumption}
\label{sec:policy-subsumption}

We formalize the notion of \emph{policy subsumption}
as a relation over \pilot policies.
We start by defining subsumption of data usage and data communication
rules, which is used to define \pilot policy subsumption.

\begin{definition}[Data Usage Rule Subsumption]
  Given two data usage rules
  $\dur_1 = \langle P_1, \rt_1 \rangle$ and
  $\dur_2 = \langle P_2, \rt_2 \rangle$,
  we say that $\dur_1$ \emph{subsumes} $\dur_2$, denoted as
  $\dur_1 \dursubsumes \dur_2$, iff
  \begin{inparaenum}[i)]
    %
    % \item $\forall o_2 \in O_2 \cdot \exists o_1 \in O_1 \text{ such that }
    % o_1 \poobligations o_2$
    %
    \item $\forall p_1 \in P_1 \cdot \exists p_2 \in P_2 \text{ such that } p_1 \popurposes p_2$; and
    \item $\rt_1 \leq \rt_2$.
  \end{inparaenum}
  \label{def:dur-subsumption}
\end{definition}

Intuitively, a data usage rule is more restrictive than another if:
\begin{inparaenum}[i)]
  % \item its obligations are stronger;
  \item the set of allowed purposes is smaller; and
  \item the data can be used for a shorter period of time.
\end{inparaenum}
For instance, $\langle \{\profiling\}, \dataprivacydaynum \rangle$
subsumes $\langle \{\profiling,\advertisement\},\gdprdaynum \rangle$
because \profiling is in both rules and
$\dataprivacydaynum$ $\leq \gdprdaynum$.

% Using the previous definition, we define subsumption of data
% communication rules as follows.

\begin{definition}[Data Communication Rule Subsumption]
  Given two data communication rules
  $\dcr_1 = \langle c_1, e_1, \dur_1 \rangle$ and
  $\dcr_2 = \langle c_2, e_2, \dur_2 \rangle$,
  we say that $\dcr_1$ \emph{subsumes} $\dcr_2$, denoted as
  $\dcr_1 \dcrsubsumes \dcr_2$, iff
  \begin{inparaenum}[i)]
    \item $c_1 \der c_2$;
    \item $e_1 \poentities e_2$; and
    \item $\dur_1 \dursubsumes \dur_2$.
  \end{inparaenum}
  \label{def:dcr-subsumption}
\end{definition}

A data communication rule is more restrictive than another if:
\begin{inparaenum}[i)]
  \item its conditions are stronger;
  \item the entity is more specific, i.e., less entities are included;
    and
  \item the usage rule is more restrictive.
\end{inparaenum}
For instance, consider the data communication rules for Parket in the
policies (\ref{eq:parket-policy}), denoted as $\dcr_1$, and
(\ref{eq:alice-policy}), denoted as $\dcr_2$, in Section~\ref{subsec:example}.
They have the same data usage rules and entity.
They only differ in the conditions.
The condition $\carlocation = \Lyon$ is clearly stronger than
$\true$ (denoted as $\true \der \carlocation = \Lyon$), since we
use $\true$ to impose no conditions (i.e., the rule is always active)
and $\carlocation = \Lyon$ to make the rule active when the
collecting camera is placed in Lyon.
Therefore, $\dcr_2$ subsumes $\dcr_1$---i.e.,
$\dcr_2 \dcrsubsumes \dcr_1$.

% In what follows, we lift the notion of subsumption from data
% communication rules to \pilot privacy policies.
%
\begin{definition}[\pilot Privacy Policy Subsumption]
  Given two \pilot privacy policies
  $\pi_1 = \langle t_1, \dcr_1, \TR_1 \rangle$
  and
  $\pi_2 = \langle t_2, \dcr_2, \TR_2 \rangle$,
  we say that $\pi_1$ \emph{subsumes} $\pi_2$, denoted as
  $\pi_1 \subsumes \pi_2$ iff
  \begin{inparaenum}[i)]
    \item $t_1 \podata t_2$;
    \item $\dcr_1 \dcrsubsumes \dcr_2$; and
    \item $\forall \tr_1 \in \TR_1 \cdot \exists \tr_2 \in \TR_2
    \text{ such that } \tr_1 \dcrsubsumes \tr_2$.
  \end{inparaenum}
  \label{def:specification-policy-subsumption}
\end{definition}
A \pilot policy is more restrictive than another if:
\begin{inparaenum}[i)]
  \item the datatype is more specific;
  \item the data communication rule is more restrictive; and
  \item the set of allowed transfers is smaller.
\end{inparaenum}

As an example, consider the policies (\ref{eq:parket-policy}), denoted
as $p_1$, and (\ref{eq:alice-policy}), denoted as $p_2$, in
the example of Section~\ref{subsec:example}.
It is easy to see that $p_2$ subsumes $p_1$, i.e.,
$p_2 \subsumes p_1$, since they apply to the same datatype, the data
communication rule in $p_2$ is more restrictive than that of $p_1$ and
$p_2$ allows for no transfers whereas $p_1$ allows for transferring
data to ParketWW.

%%% Notion of comparable not needed in the report
% In what follows we write $p \; \comp \; p'$ to denote that policies
% $p$ and $p'$ are comparable.
% %
% Formally,
% $p \; \comp \; p' \triangleq (p \subsumes p') \vee (p' \subsumes p)$.
% %
% We may omit the subscripts $s$ in $\leq_s$ and $\preceq_s$ when no
% ambiguity may arise regarding the relation we use.

%%% We don't need the notion of conformance since it is equivalent to
%%% subsumption
% We use the definition of policy subsumption to define whether a policy of from a
% data controller is in \emph{conformance} with a policy from a data subject as
% follows,
%
% \begin{definition}[Policy Conformance]
%   %
%   Given the policy of a data subject $\pi_s$ and a data controller $\pi_c$, we
%   say that $\pi_c$ is in \emph{conformance} with $\pi_s$, denoted as $\pi_c \conf
  % \pi_s$ iff $\pi_c \subsumes \pi_s$.
%   %
%   \label{def:policy-conformance}
% \end{definition}

%
\section{Active Policies and Transfer rules}\label{sec:active-policy-transfer}
Here we formally define when \pilot policies and transfer rules are active.
Let \eval{\dataval}{d}{\phi} denote an \emph{evaluation function} for
conditions.
It takes as an input a formula ($\phi$) and it is parametrised by the
valuation function (\dataval) and device ($d$).
This function returns a boolean value,
$\{\texttt{true},\texttt{false}\}$, indicating whether a condition
holds, or the undefined value, $\bot$, when the information to
evaluate the condition is missing in the local state of the device.
\eval{\dataval}{d}{\phi} is defined as described in Table~\ref{tab:eval}.
We use a function $\timestamp(e) : \Events \to \nat$ to assign a
timestamp---represented as a natural number $\nat$---to each event of
a trace.

\begin{table}[t!]
  \centering
  \begin{tabular}{rcl}
    $\eval{\dataval}{d}{\true}$ & $=$ &  $\texttt{true}$\\
    $\eval{\dataval}{d}{\false}$ & $=$ & $\texttt{false}$\\
    $\eval{\dataval}{d}{i}$ & $=$ & $\dataval(d,i)$\\
    $\eval{\dataval}{d}{c}$ & $=$ & $\hat{c}$\\
    $\eval{\dataval}{d}{f(t_1,t_2,\ldots)}$ & $=$ & $\hat{f}(\eval{\dataval}{d}{t_1},\eval{\dataval}{d}{t_2},\ldots)$\\
    $\eval{\dataval}{d}{t_1 \ast t_2}$ & $=$ &
    $\begin{cases}
      \eval{\dataval}{d}{t_1} ~ \hat{\ast} ~ \eval{\dataval}{d}{t_2} \text{ if } \eval{\dataval}{d}{t_i} \not = \bot \\
      \bot \text{ otherwise}
    \end{cases}$\\
    $\eval{\dataval}{d}{\phi_1 \wedge \phi_2}$ & $=$ &
    $\begin{cases}
      \eval{\dataval}{d}{\phi_1} \text{ and } \eval{\dataval}{d}{\phi_2} \text{ if } \eval{\dataval}{d}{\phi_i} \not = \bot \\
      \bot \text{ otherwise}
    \end{cases}$\\
    $\eval{\dataval}{d}{\neg \phi}$ & $=$ &
    $\begin{cases}
      \text{not } \eval{\dataval}{d}{\phi} \text{ if } \eval{\dataval}{d}{\phi} \not = \bot \\
      \bot \text{ otherwise}
    \end{cases}$\\
  \end{tabular}
  \caption{Definition of \eval{\dataval}{d}{\phi}.
  We use $\hat{c}$, $\hat{f}$ and $\hat{\ast}$ to denote the interpretation of
  constants, functions and binary predicates, respectively.
  We assume that these interpretations are the same across all devices.
  }
  \label{tab:eval}
\end{table}

\begin{compactdesc}
\item[Active policy.]
% %
\pilot policies, stored in a particular device, may be active depending on the
state of the system and the data item to be sent.
% %
In order to determine whether a policy is active we use a boolean function
$\activePolicy(p,$ $\send(\sndr, \rcv, i),$ $ \st)$ which returns true if policy
$p$ is active when item $i$ is sent from device $\sndr$ to device $\rcv$ in
state $\st$.
% 
% Formally, $\activePolicy(p, \send(\sndr,\rcv,i), \st) = 
%           \type(i) \podata \datatype \wedge
%           \eval{\dataval}{\sndr}{\phi} \wedge
%           \timestamp(\st, \send(\sndr,\rcv,i)) < \rt ~ \wedge
%           \devtoent(rcv) \poentities e$
%
Formally,
$$
\begin{array}{c}
  \activePolicy(p, \send(\sndr,\rcv,i), \st) = \\
          \type(i) \podata \datatype \wedge
          \eval{\dataval}{\sndr}{\phi} \wedge
          \timestamp(\st, \send(\sndr,\rcv,i)) < \rt ~ \wedge \\
          \devtoent(rcv) \poentities e
          % O \subseteq \appliedobligations(\sndr,i) \wedge
\end{array}
$$
\noindent
where $p = (\datatype,\langle \phi, e, \langle \_, \rt \rangle \rangle,\_)$
and $\st = \langle \dataval, \_, \_ \rangle$.
Intuitively, given $p = (\datatype,\langle \phi, e, \langle P, $ $\rt \rangle
\rangle, \TR)$, we check that:
\begin{inparaenum}[i)]
  \item the type of the data to be sent corresponds to the type of data
  the policy is defined for ($\type(i) \podata \datatype$);
  \item the condition of the policy evaluates to true
  ($\eval{\dataval}{\sndr}{\phi}$);
  \item the retention time for the receiver has not expired
  ($\timestamp(\send(\sndr,\rcv,i)) < \rt$); and
  %
  % \item all the obligations for the entity sending the data have been applied;
  %
  \item the entity associated with the receiver device is allowed by the policy
  ($\devtoent(rcv) \poentities e$).
\end{inparaenum}

\item[Active transfer rule.]
Similarly, we use a boolean function $\activeTransfer($ $\tr,$ $p,$ $\transfer($
$\sndr,$ $\rcv,$ $i),$ $\st)$ which determines if transfer rule $\tr$ from
policy $p$ is active when item $i$ is transferred from device $\sndr$ to device
$\rcv$ in state $\st$.
In order for a transfer rule to be active, the above checks are performed on the
transfer rule $\tr$, and, additionally, it is required that the retention time
for the sender has not elapsed ($\timestamp($$\transfer($ $\sndr,$ $\rcv,$ $i))
< \rt$).
\end{compactdesc}

\section{Policy join}\label{sec:policy-join}

We present a \emph{join operator} for \pilot policies and prove that
the resulting policy is more restrictive than the policies used to
compute the join.
We first define join operators for data usage rules and data
communication rules, and use them to the join operator for \pilot
policies.
Let $\mini(e,e')
= 
\begin{cases} 
    e  & \text{if} ~ e \leq_{\mathcal{X}} e' \\
    e' & \text{otherwise}
\end{cases}$
be a function that, given two elements $e,e' \in \mathcal{X}$ returns
the minimum in the corresponding partial order $\leq_\mathcal{X}$.
Let $\Cap$ denote the intersection keeping the minimum of comparable
elements in the partial order of purposes.  Formally, given
$P,P' \in \Purposes$,
$ 
  P \Cap P' \triangleq (P \cap P') \cup P'' \text{ where } P'' = \{
  p \in P ~ | ~ \exists p' \in P' \text{ s.t. } p <
  p' \}
$.
%
% We define the join operator for data usage rules as follows.

\begin{definition}[Data Usage Rule Join]
  Given two data usage rules
  $\dur_1 = \langle P_1, \rt_1 \rangle$ and
  $\dur_2 = \langle P_2, \rt_2 \rangle$,
  % the resulting data usage rule $\dur_1 \durjoin \dur_2 = \dur_{12} = \langle O_{12}, P_{12}, \rt_{12} \rangle$
  the data usage rule join operator is defined as:
  $
  \dur_1 \durjoin \dur_2 = \langle
                               P_1 \Cap P_2,
                               \mini(\rt_1,\rt_2)
                           \rangle
  $.

  % \begin{itemize}
  % \item $O_{12}   = $
  % \item $P_{12}   = $
  % \item $\rt_{12} = $%\qed
  % \end{itemize}
\end{definition}

% Similarly, we defined the join operator for data communication rules.

\begin{definition}[Data Communication Rule Join]
  Given two data communication rules
  $\dcr_1 = \langle c_1, e_1, \dur_1 \rangle$ and
  $\dcr_2 = \langle c_2, e_2, \dur_2 \rangle$,
  % the resulting data usage rule $\dcr_{12} = \langle c_{12}, O_{12}, e_{12}, \dur_{12} \rangle$
  %
  the data communication rule join operator is defined as:
  $
  \dur_1 \dcrjoin \dur_2 = \langle
                               c_1 \wedge c_2,
                               \mini(e_1,e_2),
                               \dur_1 \durjoin \dur_2
                           \rangle
  $.
  % \begin{itemize}
  % \item $   c_{12} = c_1 \wedge c_2$
  % \item $   O_{12} = O_1 \Cup O_2$
  % \item $   e_{12} = \mini(e_1,e_2)$
  % \item $\dur_{12} = \mini(e_1,e_2)$%\qed
  % \end{itemize}
\end{definition}

% Finally, we define the join of specification polices as follows.

\begin{definition}[\pilot Policy Join \policyjoin]\label{def:policy-join}
  Given two \pilot policies
  $p = (t_1, \dcr_1, \TR_1)$ and
  $q = (t_2, \dcr_2, \TR_2)$,
  % the resulting data usage rule $\dcr_{12} = \langle c_{12}, O_{12}, e_{12}, \dur_{12} \rangle$
  the policy join operator is defined as:
  $
  \dur_1 \dcrjoin \dur_2 = (
                              \mini(t_1,t_2),
                              \dcr_1 \dcrjoin \dcr_2,
                              \{t ~ \dcrjoin ~ t' ~ | ~ t  \in \TR_1   ~ \wedge
                                                  ~ t' \in \TR_2   ~ \wedge
                                                  ~ t  \dcrsubsumes t' ~ \}
                           )
  $.
\end{definition}

\subsection{Privacy Preserving Join}

%% Privacy preserving Join
%
We say that an join operation is privacy preserving if the resulting
policy is more restrictive than both operands.
% %
Formally,
\begin{definition}[Privacy Preserving Join]
  We say that $\policyjoin$ is \emph{privacy preserving} iff
  $
    \forall p,q \in \wfpp \cdot (p \policyjoin q) \subsumes p \wedge (p \policyjoin q) \subsumes q
  $.
\end{definition}

Intuitively, it means that it satisfies the preferences of both
policies.
If $p$ and $q$ correspond to the policies of \ds and \dc,
respectively, the resulting \pilot policy is more restrictive than
that of the \ds and \dc, thus:
\begin{inparaenum}[i)]
\item it is not allowed by the \dc to use the \ds data in any way not
  expressed in the \ds policy; and
\item the \dc will be able to enforce the policy---since it is more
  restrictive than the one she proposed, the policy will not contain
  anything that the \dc have not foreseen.
\end{inparaenum}

In what follows we prove that the operation $\policyjoin$ is privacy
preserving, Lemma~\ref{lem:joinpreserves}.
First, we prove two lemmas required for the proof of
Lemma~\ref{lem:joinpreserves}.

\begin{lemma}\label{lem:durjoinpreserves}
    Given two data usage rules $\dur_1, \dur_2 \in \wfur$ it holds that
    $ \dur_1 \durjoin \dur_2 \dursubsumes \dur_1 ~ \text{and} ~ \dur_1 \durjoin \dur_2 \dursubsumes \dur_2$.
    \vspace{-2mm}
\end{lemma}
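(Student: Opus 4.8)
The statement to prove is that for any two data usage rules $\dur_1 = \langle P_1, \rt_1 \rangle$ and $\dur_2 = \langle P_2, \rt_2 \rangle$, the join $\dur_1 \durjoin \dur_2$ subsumes each of $\dur_1$ and $\dur_2$. By Definition~\ref{def:dur-subsumption}, this amounts to two separate obligations for each operand: a purpose-set obligation and a retention-time obligation. I would unfold the definition of $\durjoin$, writing $\dur_1 \durjoin \dur_2 = \langle P_1 \Cap P_2, \mini(\rt_1, \rt_2) \rangle$, and then check each of the four resulting conditions.

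First, the retention-time conditions. I need $\mini(\rt_1,\rt_2) \leq \rt_1$ and $\mini(\rt_1,\rt_2) \leq \rt_2$. Both follow immediately from the definition of $\mini$: a simple case split on whether $\rt_1 \leq \rt_2$ shows that $\mini(\rt_1,\rt_2)$ equals whichever of the two is smaller, and in either case it is bounded above by both. This is the routine part.

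Second, the purpose-set conditions. I need to show $\forall p \in P_1 \Cap P_2 \cdot \exists p_1 \in P_1 \text{ s.t. } p \popurposes p_1$, and symmetrically with $P_2$. Here I would use the definition $P_1 \Cap P_2 = (P_1 \cap P_2) \cup P''$ where $P'' = \{ p \in P_1 \mid \exists p' \in P_2 \text{ s.t. } p < p' \}$. Take any $p \in P_1 \Cap P_2$. If $p \in P_1 \cap P_2$, then $p \in P_1$ and $p \popurposes p$ by reflexivity of the partial order, so witness $p_1 = p$; likewise $p \in P_2$ handles the symmetric claim. If instead $p \in P''$, then by construction $p \in P_1$, so again $p_1 = p$ works for the $P_1$ side; for the $P_2$ side we have some $p' \in P_2$ with $p < p'$, hence $p \popurposes p'$, so witness $p_2 = p'$. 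That covers all cases.

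The main (minor) obstacle is being careful with the asymmetry of the $\Cap$ operator: its definition as written treats $P_1$ and $P_2$ differently (the extra elements $P''$ are drawn from $P_1$), so the argument for $\dur_1 \durjoin \dur_2 \dursubsumes \dur_2$ via the $P''$-case genuinely needs the strictness $p < p'$ to produce a witness $p' \in P_2$ that dominates $p$, rather than simply reusing $p$. One should also note that reflexivity of $\popurposes$ (a partial order, hence reflexive) is what makes the $P_1 \cap P_2$ case go through — worth stating explicitly since the definition of subsumption only asks for $\popurposes$, not strict order. Beyond that the proof is a direct unfolding with a two-way case analysis, and no induction or auxiliary construction is needed.
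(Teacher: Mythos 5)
Your proof is correct and follows essentially the same route as the paper's: unfold $\durjoin$, handle the retention time by the definition of $\mini$, and split the purpose-set obligation according to the two components of $\Cap$, using $p$ itself as witness in the $P_1\cap P_2$ case and on the $P_1$ side of the $P''$ case, and the dominating $p'\in P_2$ as witness on the $P_2$ side. Your explicit remarks on the asymmetry of $\Cap$ and the role of reflexivity of $\popurposes$ are accurate and, if anything, slightly more careful than the paper's own write-up.
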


\begin{proof}
    We split the proof into the two conjuncts of Lemma~\ref{lem:durjoinpreserves}.
    \begin{compactdesc}
        \item \underline{$\dur_1 \durjoin \dur_2 \dursubsumes \dur_1$} - We split the proof into the elements of data usage rules, i.e., purposes and retention time.
    
        \begin{itemize}
            \item We show that $\forall p \in \dur_1.P \Cap \dur_2.P \cdot \exists p' \in \dur_1.P ~ \text{such that} ~ p \popurposes p'$.
    
            \begin{itemize}
                \item $\forall p \in [(\dur_1.P \cap \dur_2.P) \cup \{p_1 \in \dur_1.P ~ | ~ \exists p_2 \in \dur_2.P \text{ s.t. } p_1 \popurposes p_2 \}] \cdot \exists p' \in \dur_1.P ~ \text{such that} ~ p \popurposes p'$ [By Def. $\Cap$]
                \item We split the proof for each operand in the union.
    
                \begin{itemize}
                    \item We show that $\forall p \in (\dur_1.P \cap \dur_2.P) \cdot \exists p' \in \dur_1.P ~ \text{such that} ~ p \popurposes p'$.
                    Assume $p \in (\dur_1.P \cap \dur_2.P)$.
                    Then $\exists p' \in \dur_1.P ~ \text{s.t.} ~ p=p'$ [By Def. $\cap$].
                    Therefore, $p \popurposes p'$.

                    \item We show that $\forall p \in \{p_1 \in \dur_1.P ~ | ~ \exists p_2 \in \dur_2.P \text{ s.t. } p_1 \popurposes p_2 \} \cdot \exists p' \in \dur_1.P ~ \text{such that} ~ p \popurposes p'$.
                    Assume $p \in \{p_1 \in \dur_1.P ~ | ~ \exists p_2 \in \dur_2.P \text{ s.t. } p_1 \popurposes p_2 \}$.
                    Then, $p \in \dur_1.P$, and, consequently, $\exists p' \in \dur_1.P ~ \text{s.t.} ~ p \popurposes p'$.
                \end{itemize}
            \end{itemize}
            \item $\mini(\dur_1.\rt,\dur_2.\rt) \leq \dur_1.\rt$ [By Def. \mini]
        \end{itemize}

        \item \underline{$\dur_1 \durjoin \dur_2 \dursubsumes \dur_2$} - Retention time is symmetric to the previous case, therefore we only show purposes.
        \begin{itemize}
            \item We show that $\forall p \in \dur_1.P \Cap \dur_2.P \cdot \exists p' \in \dur_2.P ~ \text{such that} ~ p \popurposes p'$.
    
            \begin{itemize}
                \item $\forall p \in [(\dur_1.P \cap \dur_2.P) \cup \{p_1 \in \dur_1.P ~ | ~ \exists p_2 \in \dur_2.P \text{ s.t. } p_1 \popurposes p_2 \}] \cdot \exists p' \in \dur_2.P ~ \text{such that} ~ p \popurposes p'$ [By Def. $\Cap$]
                \item We split the proof for each operand in the union.
    
                \begin{itemize}
                    \item The case $\forall p \in (\dur_1.P \cap \dur_2.P) \cdot \exists p' \in \dur_2.P ~ \text{such that} ~ p \popurposes p'$ is symmetric to the case above.
                    % Assume $p \in (\dur_1.P \cap \dur_2.P)$.
                    % %
                    % Then $\exists p' \in \dur_2.P ~ \text{s.t.} ~ p \popurposes p'$. [By Def. $\cap$; in fact, $p = p'$]
                    
                    \item We show that $\forall p \in \{p_1 \in \dur_1.P ~ | ~ \exists p_2 \in \dur_2.P \text{ s.t. } p_1 \popurposes p_2 \} \cdot \exists p' \in \dur_2.P ~ \text{such that} ~ p \popurposes p'$.
                    Assume $p \in \{p_1 \in \dur_1.P ~ | ~ \exists p_2 \in \dur_2.P \text{ s.t. } p_1 \popurposes p_2 \}$.
                    Then $\exists p_2 \in \dur_2.P$ s.t. $p \popurposes p_2$, and, consequently, $\exists p' \in \dur_2.P \text{ s.t. } p \popurposes p'$.%\qed
                \end{itemize}
        \end{itemize}
    \end{itemize}
    \end{compactdesc}
\end{proof}

\begin{lemma}\label{lem:dcrjoinpreserves}
    Given two data communication rules $\dcr_1, \dcr_2 \in \wfcr$ it holds that
    $ \dcr_1 \durjoin \dcr_2 \dursubsumes \dcr_1 ~ \text{and} ~ \dcr_1 \dcrjoin \dcr_2 \dcrsubsumes \dcr_2$.
    \vspace{-1mm}
\end{lemma}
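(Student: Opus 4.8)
The plan is to prove the two statements $\dcr_1 \dcrjoin \dcr_2 \dcrsubsumes \dcr_1$ and $\dcr_1 \dcrjoin \dcr_2 \dcrsubsumes \dcr_2$ by unfolding the definition of the data communication rule join and of data communication rule subsumption (Definition~\ref{def:dcr-subsumption}), reducing each to three component-wise obligations, the last of which is discharged directly by Lemma~\ref{lem:durjoinpreserves}. Write $\dcr_1 = \langle c_1, e_1, \dur_1 \rangle$ and $\dcr_2 = \langle c_2, e_2, \dur_2 \rangle$, so that $\dcr_1 \dcrjoin \dcr_2 = \langle c_1 \wedge c_2,\, \mini(e_1,e_2),\, \dur_1 \durjoin \dur_2 \rangle$. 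For $\dcr_1 \dcrjoin \dcr_2 \dcrsubsumes \dcr_1$, Definition~\ref{def:dcr-subsumption} asks for: (i) $c_1 \wedge c_2 \der c_1$; (ii) $\mini(e_1,e_2) \poentities e_1$; and (iii) $\dur_1 \durjoin \dur_2 \dursubsumes \dur_1$. For $\dcr_1 \dcrjoin \dcr_2 \dcrsubsumes \dcr_2$ it asks, symmetrically, for: (i) $c_1 \wedge c_2 \der c_2$; (ii) $\mini(e_1,e_2) \poentities e_2$; and (iii) $\dur_1 \durjoin \dur_2 \dursubsumes \dur_2$.

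I would then discharge the three kinds of obligation in turn. The condition obligations $c_1 \wedge c_2 \der c_1$ and $c_1 \wedge c_2 \der c_2$ are a basic property of the logical language of conditions and of its comparison relation $\der$ --- a conjunction is at least as strong as each of its conjuncts --- and I would state this as a one-line observation rather than derive it. The entity obligations $\mini(e_1,e_2) \poentities e_1$ and $\mini(e_1,e_2) \poentities e_2$ follow straight from the definition of $\mini$, exactly as the retention-time step is handled in the proof of Lemma~\ref{lem:durjoinpreserves}. The data-usage-rule obligations are precisely the statement of Lemma~\ref{lem:durjoinpreserves}, namely $\dur_1 \durjoin \dur_2 \dursubsumes \dur_1$ and $\dur_1 \durjoin \dur_2 \dursubsumes \dur_2$, so nothing more is needed there. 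Collecting the three components yields each of the two subsumptions, and the two proofs mirror each other with the roles of $\dcr_1$ and $\dcr_2$ interchanged.

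I do not anticipate a real obstacle: the argument is essentially unfolding two definitions and citing Lemma~\ref{lem:durjoinpreserves}, so the only points that need care are bookkeeping ones. One must keep track of the orientation of $\der$ so that the conjunction fact is invoked in the direction that condition~(i) of Definition~\ref{def:dcr-subsumption} expects; and the entity step relies on $\mini$ returning an actual lower bound of its two arguments, which by the definition of $\mini$ holds whenever $e_1$ and $e_2$ are comparable in $\poentities$ --- this comparability is the only place where the otherwise-routine argument leans on an assumption not made fully explicit.
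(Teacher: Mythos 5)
Your proof is correct and follows essentially the same route as the paper's: split each subsumption into its three component obligations, discharge the condition by $\wedge$-elimination, the entity by the definition of $\mini$, and the data usage rule by Lemma~\ref{lem:durjoinpreserves}, with the second conjunct symmetric. Your closing remarks on the orientation of $\der$ and on $\mini$ requiring $e_1$ and $e_2$ to be comparable in $\poentities$ identify real imprecisions that the paper's own one-line justifications silently gloss over.
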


\begin{proof}
    We split the proof into the two conjuncts of Lemma~\ref{lem:dcrjoinpreserves}.
    \begin{compactdesc}
        \item \underline{$\dcr_1 \durjoin \dcr_2 \dursubsumes \dcr_1$} - We split the proof into the elements of data communication rules, i.e., conditions and entities and data usage rules.
        \begin{itemize}
            \item $\dcr_1.c \wedge \dcr_2.c \der \dcr_1.c$. [By $\wedge$-elimination]
            \item $\mini(\dcr_1.e,\dcr_2.e) \poentities \dcr_1.e$ [By Def. \mini]
            \item $\dcr_1.\dur \durjoin \dcr_2.\dur \dcrsubsumes \dcr_1.\dur$. [By Lemma~\ref{lem:durjoinpreserves}]
        \end{itemize}
        \item \underline{$\dcr_1 \durjoin \dcr_2 \dursubsumes \dcr_2$} - The proof is symmetric to the previous case.%\qed
  \end{compactdesc}
\end{proof}

\begin{lemma}\label{lem:joinpreserves}
  The operation $\policyjoin$ in Definition~\ref{def:policy-join} is
  privacy preserving.
  \vspace{-1mm}
\end{lemma}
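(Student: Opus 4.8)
<br>

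The plan is to unfold Definition~\ref{def:specification-policy-subsumption} and check each of its three clauses for the policy $p \policyjoin q$ against $p$, and then against $q$, reusing the two auxiliary lemmas already established. Fix $p = (t_1, \dcr_1, \TR_1)$ and $q = (t_2, \dcr_2, \TR_2)$; by Definition~\ref{def:policy-join} we have $p \policyjoin q = (\mini(t_1,t_2),\, \dcr_1 \dcrjoin \dcr_2,\, \TR')$ where $\TR' = \{\, t \dcrjoin t' \mid t \in \TR_1 \wedge t' \in \TR_2 \wedge t \dcrsubsumes t' \,\}$. I would prove $(p \policyjoin q) \subsumes p$ first and then remark that $(p \policyjoin q) \subsumes q$ is symmetric (with the one asymmetry in the transfer clause handled explicitly, see below).

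For $(p \policyjoin q) \subsumes p$ I would verify the three conditions of Definition~\ref{def:specification-policy-subsumption} in turn. Clause (i): $\mini(t_1,t_2) \podata t_1$ holds directly by the definition of $\mini$. Clause (ii): $\dcr_1 \dcrjoin \dcr_2 \dcrsubsumes \dcr_1$ is exactly the first conjunct of Lemma~\ref{lem:dcrjoinpreserves}. Clause (iii): I must show $\forall \tr_1 \in \TR' \cdot \exists \tr_2 \in \TR_1 \text{ such that } \tr_1 \dcrsubsumes \tr_2$. Take any $\tr_1 \in \TR'$; by construction $\tr_1 = t \dcrjoin t'$ for some $t \in \TR_1$, $t' \in \TR_2$ with $t \dcrsubsumes t'$. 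Choose $\tr_2 = t \in \TR_1$. Then $\tr_1 = t \dcrjoin t' \dcrsubsumes t = \tr_2$ by Lemma~\ref{lem:dcrjoinpreserves}. This settles $(p \policyjoin q) \subsumes p$.

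For $(p \policyjoin q) \subsumes q$, clauses (i) and (ii) are symmetric (using the second conjuncts of $\mini$ and Lemma~\ref{lem:dcrjoinpreserves}). Clause (iii) requires $\forall \tr_1 \in \TR' \cdot \exists \tr_2 \in \TR_2 \text{ such that } \tr_1 \dcrsubsumes \tr_2$: given $\tr_1 = t \dcrjoin t'$ with $t \in \TR_1$, $t' \in \TR_2$, $t \dcrsubsumes t'$, pick $\tr_2 = t' \in \TR_2$ and conclude $\tr_1 = t \dcrjoin t' \dcrsubsumes t'$ again by Lemma~\ref{lem:dcrjoinpreserves}. Since both $(p\policyjoin q)\subsumes p$ and $(p\policyjoin q)\subsumes q$ hold for arbitrary $p,q \in \wfpp$, $\policyjoin$ is privacy preserving.

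The only subtle point — and the one I would be most careful about — is clause (iii) in the $q$ direction: the definition of $\TR'$ only includes pairs $t \dcrjoin t'$ when $t \dcrsubsumes t'$, so every element of $\TR'$ indeed has a witness in $\TR_2$, but one should note that it is not claimed that \emph{every} $t' \in \TR_2$ is covered — only that every element actually present in $\TR'$ is subsumed by something in $\TR_2$, which is exactly what clause (iii) of Definition~\ref{def:specification-policy-subsumption} demands. Everything else is a routine descent through the structure of \pilot policies, and the real content is entirely carried by Lemmas~\ref{lem:durjoinpreserves} and~\ref{lem:dcrjoinpreserves}.
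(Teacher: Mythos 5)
Your proof is correct and follows essentially the same route as the paper's: unfold the three clauses of policy subsumption, discharge the datatype clause by the definition of $\mini$, the communication-rule clause by Lemma~\ref{lem:dcrjoinpreserves}, and the transfer clause by picking the appropriate component of each joined pair as witness, with the second conjunct symmetric. Your explicit treatment of the transfer clause in the $q$ direction (and the remark on the direction of the quantifier in clause (iii)) is a small elaboration of what the paper dismisses as ``symmetric,'' not a different argument.
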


\begin{proof}
  Let $p_1$ and $p_2$ be two \pilot privacy policies. We show that 
  $ p_1 \policyjoin p_2 \subsumes p_1 ~ \text{and} ~ p_1 \policyjoin p_2 \subsumes p_2.$
  We proof each conjunct separately.
  
  \begin{compactdesc}
  \item \underline{$p_1 \policyjoin p_2 \subsumes p_1$} - We split the proof in cases based on the structure of \pilot policies, i.e., datatype ($p_1.t$,$p_2.t$), data communication rules ($p_1.\dcr, p_2.\dcr$) and transfers ($p_1.\TR, p_2.\TR$).
  
  \begin{itemize}
  \item $\mini(p_1.t,p_2.t) \podata p_1.t$. [By Def. of \mini]
  
  \item $p_1.\dcr \dcrjoin p_2.\dcr \dcrsubsumes p_1.\dcr$. [By Lemma~\ref{lem:dcrjoinpreserves}]
  
  \item $\forall \tr_\policyjoin \in \{\tr_1 \dcrjoin \tr_2 \mid \tr_1 \in p_1.\TR \wedge \tr_2 \in p_2.\TR \wedge \tr_1 \dcrsubsumes \tr_2\} \cdot \exists \tr \in \TR_1 ~ \text{s.t.} ~ \tr_\policyjoin \dcrsubsumes \tr$.
%   \begin{itemize}
    Assume $\tr_\policyjoin \in \{\tr_1 \dcrjoin \tr_2 \mid \tr_1 \in p_1.\TR \wedge \tr_2 \in p_2.\TR \wedge \tr_1 \dcrsubsumes \tr_2\}$.
    Then $\tr_1 \in p_1.\TR$ and $\tr_\policyjoin \dcrsubsumes \tr_1$. [By Lemma~\ref{lem:dcrjoinpreserves}]
%   \end{itemize}
  \end{itemize}
  \item \underline{$p_1 \policyjoin p_2 \subsumes p_2$} - The proof is symmetric to the previous case.%\qed
  \end{compactdesc}
\end{proof}

\end{document}